\newtheorem{definition}{Definition}
\newtheorem{lemma}{Lemma}
\newtheorem{theorem}{Theorem}
\newtheorem{corollary}{Corollary}
\newcommand{\ket}[1]{\mbox{$\left| #1 \right\rangle$}}
\newcounter{protocol}
\newcommand{\be}{\begin{equation}}
\newcommand{\ee}{\end{equation}}
\newcommand{\bea}{\begin{eqnarray}}
\newcommand{\eea}{\end{eqnarray}}
\definecolor{mygreen}{rgb}{0,0.5,0}
\definecolor{myblue}{rgb}{0,0,0.75}
\definecolor{mymagenta}{cmyk}{0,1,0,0.12}
\newcommand{\sect}[1]{~\\\noindent{\large{\bf{#1}}~\\~\\ \noindent}}
\newcommand{\subsect}[1]{\noindent{\bf{#1}}\\ \noindent}
\begin{document}
\title{Constant-round Multi-party Quantum Computation for Constant Parties}
\author{Zhu Cao}
\email{caozhu@ecust.edu.cn}
\affiliation{Key Laboratory of Advanced Control and Optimization for Chemical Processes of Ministry of Education, East China University of Science and Technology, Shanghai 200237, China}
\affiliation{Shanghai Institute of Intelligent Science and Technology, Tongji University, Shanghai 200092, China}

\maketitle

{\bf
One of the central themes in classical cryptography is multi-party computation, which performs joint computation on multiple participants' data while maintaining data privacy. The extension to the quantum regime was proposed in 2002, but despite two decades of research, the current state-of-the-art multi-party quantum computation protocol for a fixed number of parties (even 2 parties) requires unbounded rounds of communication, which greatly limit its practical usage. In this work, we propose the first constant-round  multi-party quantum computation protocol for a fixed number of parties against specious adversaries, thereby significantly reducing the required number of rounds. Our work constitutes a key step towards practical implementation of secure multi-party quantum computation, and opens the door for practitioners to be involved in this exciting field. The result of our work has wide implications to quantum fidelity evaluation, quantum machine learning, quantum voting, and distributed quantum information processing.
}

\sect{Introduction}
\label{sec:introduction}
The quantum technology is able to significantly boost the security level of network communication, forming a vibrant field of quantum cryptography \cite{Bennett:BB84:1984}. As the technology of pairwise quantum communication gradually reaches maturity both theoretically and experimentally \cite{LoChauQKD_99,ShorPreskill_00, Lo:MDIQKD:2012, arnon2018practical,liao2017satellite}, research on quantum networks with multiple nodes has gradually gained attention, and spurs a bunch of pioneering works on various quantum network tasks, including multi-node versions of quantum key distribution \cite{fu2015long}, quantum steering \cite{he2013genuine}, quantum teleportation \cite{yonezawa2004demonstration}, and dense coding \cite{jing2003experimental}. As it turns out, a quantum network with multiple nodes contains a much richer structure than its counterpart with two nodes \cite{gisin2020constraints}, presenting both new challenges and interesting new physics under its belt. Moreover, quantum network serves as a powerful infrastructure and has strong connections to quantum computation \cite{spiller2006quantum}, quantum metrology \cite{giovannetti2004quantum}, clock synchronization \cite{giovannetti2001quantum} and distributed quantum computing \cite{cirac1999distributed}. Advances in quantum networks can foreseeably facilitate fast progress in these related fields.

So far, the quantum cryptography community has been mainly focused on quantum key distribution \cite{Bennett:BB84:1984}, an intrinsically two-node primitive. It is natural to wonder whether quantum technology can find an equally important privacy-preserving application in the quantum network setting. To this end, it is helpful to draw inspiration from the classical cryptology literature.
In a classical network with multiple nodes, one of the central security goals is \emph{multi-party computation} (MPC), which is closely related to many other security topics, including fully homomorphic encryption \cite{gentry2009fully}, coin tossing \cite{blum1983coin}, oblivious transfer \cite{rabin2005exchange}, bit commitment \cite{naor1991bit},  user identification \cite{peacock2004typing}, authenticated key exchange \cite{bellare2000authenticated}, and zero-knowledge proofs \cite{feige1988zero}.  
As a motivation for MPC, consider the following scenario. Several employees want to vote on a controversial issue and only inform the manager whether a majority voted ``yes'' or not, keeping their individual opinions private. If there exists a trusted third-party, a potential solution is that the employees send their votes to the third-party and the third-party aggregates the votes and informs the manager the result, as shown in Fig.~\ref{fig:concept}(A). However, in reality, there is often no such third-party trustable by all employees. Multiparty computation (MPC) aims to accomplish the task in this challenging setting, by somehow exchanging encrypted messages among these untrusted employees, as shown in Fig.~\ref{fig:concept}(B). Mathematically, in MPC, $n$ parties aim to jointly compute a function $F$ on their private data $(x_1, x_2, \cdots, x_n)$, while not revealing their private data except which is absolutely needed to compute $F$. In other words, even if $n-1$ parties are dishonest and collude, they cannot learn the honest party's input more than which can be inferred from their own inputs and the output $z=F(x_1, x_2, \cdots, x_n)$.

\begin{figure}[htb]
\centering \includegraphics[width=8cm]{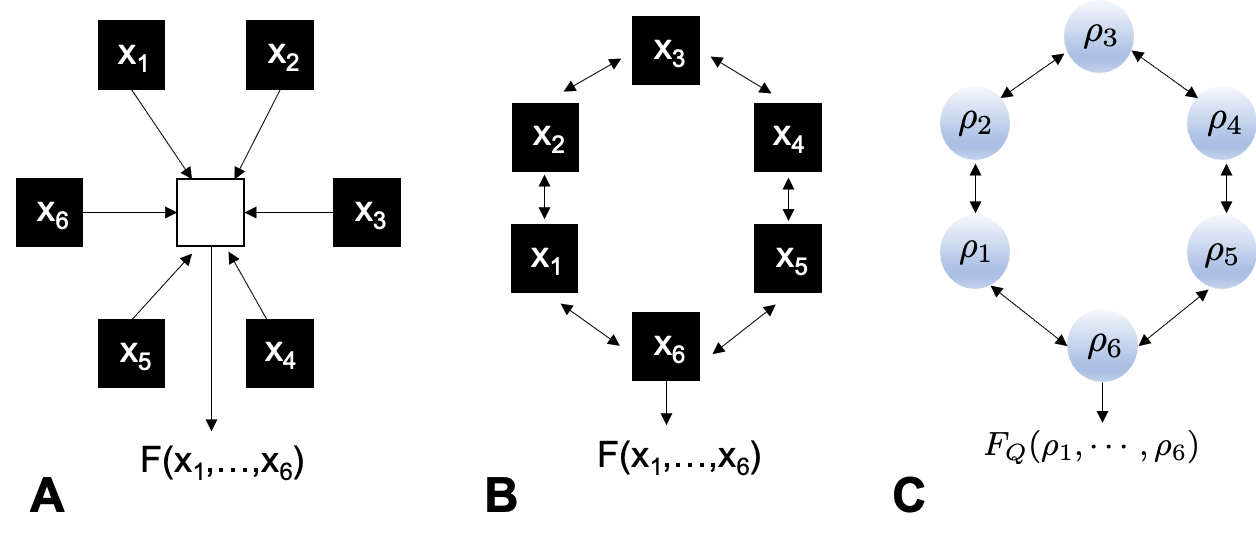}
\caption{{\bf Problem setting. (A)} Several untrusted parties (black squares) send their private classical data to a trusted central node (white square), which subsequently outputs the result of a classical circuit $F$ on the data. {\bf (B)} Without a trusted node, these untrusted parties performs MPC to get the result, maintaining the property that their private data are hided from each other. {\bf (C)} Upgraded with private quantum data (circles), the parties perform MPQC to get the result of a quantum circuit $F_Q$.}
\label{fig:concept}
\end{figure}

Multi-party computation is widely applicable to a vast number of situations where participants have sensitive information but do not trust each other, such as electronic transactions \cite{chellappa2002perceived}, auctions \cite{riley1981optimal}, and contract signing \cite{garay1999abuse}. The notion of MPC is first initiated by Yao \cite{yao1986generate}, who also proposed a two-party MPC protocol.  Later, MPC is extended to multiple parties \cite{micali1987play}, which has round complexity linear in the depth of the circuit that computes $F$. The round number is reduced to a constant in \cite{beaver1990round,rogaway1991round}, which takes the MPC protocol in \cite{micali1987play} as a subroutine. Recently, the round complexity for semi-honest adversaries is further reduced to two with the minimal assumption that a two-round oblivious transfer (OT) exists \cite{garg2018two,benhamouda2018k}. It has also been shown that MPC with one round is impossible \cite{cohen2020broadcast}, hence two rounds are both necessary and sufficient for MPC against semi-honest adversaries. For malicious adversaries, it has been shown also recently that four rounds are both sufficient  \cite{badrinarayanan2018promise,halevi2018round}  and necessary \cite{applebaum2020round} for secure MPC.

In a quantum network, by direct analogy, it is conceivable that multi-party quantum computation (MPQC) also plays a paramount role in quantum network security, and has wide applications to many quantum network tasks. Secure multi-party computation is first generalized to the quantum regime by Claude et al. \cite{crepeau2002secure}. The parties now hold quantum data $\rho_i$ instead of classical data $x_i$. A pictorially illustration of MPQC is shown in Fig.~\ref{fig:concept}(C). Currently, both the best two-party MPQC \cite{dupuis2010secure} and  the best multi-party MPQC \cite{dulek2020secure}  in terms of round complexity has round number linear in the quantum circuit depth $d$, and hence are unbounded as the circuit depth grows. This is in stark contrast with Yao's original secure two-party computation protocol, where only a constant number of rounds is needed. Following Yao's paradigm, we design a constant-round two-party MPQC and a constant-round multi-party MPQC for a fixed number of parties against specious adversaries, significantly reducing the round requirement of MPQC.

Technically, our work exploits a tool called decomposable quantum random encoding (DQRE), which is a quantum analog of Yao's garbled circuit in his constant-round two-party protocol construction. DQRE encrypts a  quantum input $\rho$ and a quantum function $F_Q$ so that only the value of the function on this input $F_Q(\rho)$ can be obtained from the encryption, but not the input $\rho$ or the function  $F_Q$ themselves. Our two-party MPQC protocol takes the component DQRE as a black box, while our multi-party MPQC protocol uses DQRE in a non-black-box way. For the multi-party MPQC protocol, we also develop a technique, called ``qubit flipping'', which maybe of independent interest.
 Since our work significantly reduces the round requirement of MPQC from an unbounded  number to a constant, our work constitutes a significant step forward towards practical implementation of MPQC. As an application, our result also significantly cuts resource requirement in distributed quantum information tasks, including quantum fidelity evaluation \cite{vanivcek2003semiclassical}, quantum machine learning \cite{biamonte2017quantum}, and quantum voting \cite{vaccaro2007quantum}, as these tasks can be instantiated as a MPQC problem.
 
 % Some more implications should be put here
 
% The organization for the rest of the paper is as follows. In Sec.~\ref{sec:pre}, we review some previous results that will be used later in the paper. In Sec.~\ref{sec:protocol}, we present the constructions of a secure two-party quantum computation protocol and a secure multi-party quantum computation protocol against specious adversaries, together with their security analysis. In Sec.~\ref{sec:discuss}, we conclude the paper and discuss some future directions. 

%\section{Preliminaries}
%\label{sec:pre}

\sect{Results}
Before presenting our protocols, let us
first give a formal definition of MPQC against specious adversaries. To begin with, let us give the definition of a specious adversary.
\begin{definition}[Specious adversary \cite{dupuis2010secure}]
An adversary in a protocol is called specious if at every step of the protocol, it can transform its actual state to 
one that is indistinguishable from the ideal state. 
\end{definition}
From its definition, it can be seen that a specious adversary is a quantum analogue of a semi-honest adversary, which follows the protocol but is curious about honest parties' inputs. Compared with an honest party, a specious adversary can in addition utilize an arbitrarily large quantum
memory to store extra information, and eliminates this memory when it is called to be compared with the state if the adversary were honest.

We are now ready to give a definition for multi-party quantum computation against a specious adversary.
\begin{definition}[MPQC against specious adversary]
A multi-party quantum computation protocol for a quantum operation $F$ on the quantum inputs $x_1, x_2, \dots, x_n$ of party 1, party 2, $\dots$, party $n$ respectively against a specious adversary satisfies the following properties:
\begin{enumerate}
\item At the end of the protocol, every party gets the result $F(x_1, x_2, \dots, x_n)$. 
\item $\epsilon$-privacy: Throughout the protocol, every party is ignorant of all information except the final result $F(x_1, x_2, \dots, x_n)$ and its own input $x_i$, 
i.e., for any specious party $A$ holding the input $x_i$, there is a simulator $S$ that only takes $x_i$ and $F(x_1, x_2, \dots, x_n)$ as inputs and simulates the view of $A$ at every step of the protocol. That is, at every step of the protocol, for any distinguisher $D$, the advantage that $D$ can distinguish the view of $A$ and the output of $S$ is at most $\epsilon$.
\end{enumerate}
\end{definition}

\subsect{MPQC for Two Parties}
\begin{figure*}[htb]
\centering \includegraphics[width=15cm]{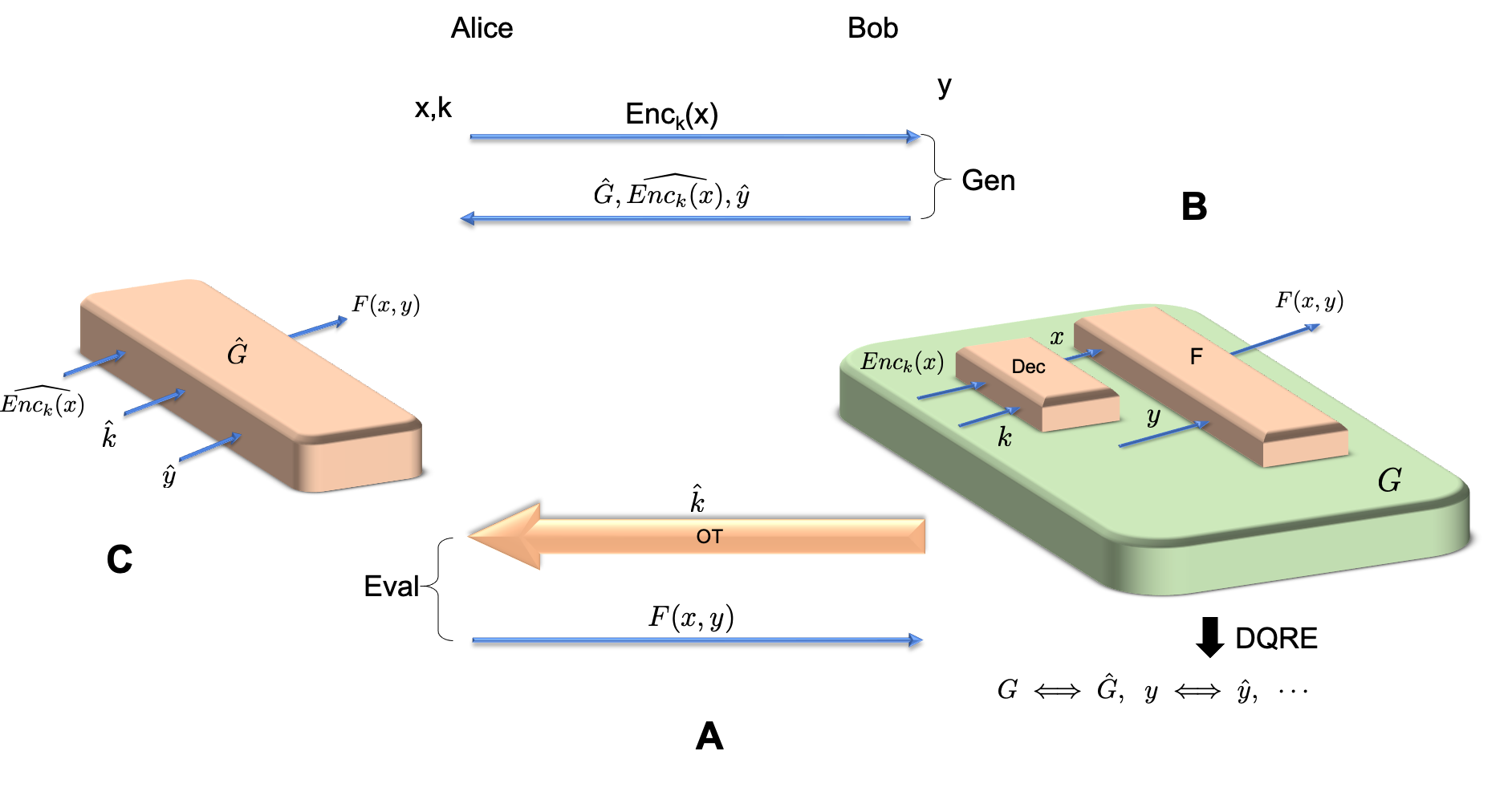}
\caption{ {\bf MPQC protocol for two parties. (A)} Alice holds a quantum input $x$ and a classical random string $k$. Bob holds a quantum input $y$. In the first round, Alice sends $Enc_k(x)$ to Bob. Bob generates a garbled program for a circuit $G$, and sends $\hat{G}, \widehat{Enc_k(x)}, \hat{y}$ to Alice. Alice interacts with Bob using an OT protocol to get $\hat{k}$. Alice then runs the evaluation circuit to get $F(x,y)$. Finally, Alice sends $F(x,y)$ to Bob, so that both parties get the computation result. 
 {\bf (B)} Illustration of the $G$ circuit. It first takes $Enc_k(x)$ and $x$ as inputs and uses a decryption algorithm $Dec$ to get $x$. Then it takes $y$ as the second input, feeds them to the circuit $F$, and gets the output $F(x,y)$ {\bf (C)} Illustration of the evaluation circuit. According to the property of the garbled program, the garbled circuit $\hat{G}$ takes $\widehat{Enc_k(x)}, \hat{k}, \hat{x}$ as inputs and outputs $F(x,y)$.} 
\label{fig:twoparty}
\end{figure*}
In this section, we present the protocol of MPQC for two parties. 
An overview of the protocol is shown in Fig.~\ref{fig:twoparty}.
As it can be seen, the protocol relies on two cryptographic primitives, decomposable quantum random encoding (DQRE) and oblivious transfer (OT). Hence, we first recall their definitions:
\begin{definition}[DQRE]
A decomposable quantum random encoding of a quantum operation $F$ and a quantum state $x$ is a quantum state $\hat{F}(x)$ that satisfies the following three properties:
\begin{enumerate}
\item F(x) can be decoded from $\hat{F}(x)$.
\item $\epsilon$-privacy: $\hat{F}(x)$ reveals almost no information of $F$ and $x$ except $F(x)$, i.e., there is a simulator $Sim$ such that for any distinguisher $D$ and any side information $y$ (e.g., $y$ can be $x$ or $F$), the advantage that $D$ can distinguish $(\hat{F}(x),y)$ from $(Sim(F(x)),y)$ is upper bounded by $\epsilon$.
\item $\hat{F}(x)$ encodes each qubit of $x$ independently.
\end{enumerate}
\end{definition}

\begin{definition}[OT against specious adversaries]
An oblivious transfer is a two-party protocol in the following setting. 
The two parties are called Alice and Bob. Alice holds a bit $b$ unknown to Bob. Bob holds two quantities $y_0$ and $y_1$ that are initially unknown to Alice. An $\epsilon$-secure oblivious transfer against specious adversaries satisfies the following properties:
\begin{enumerate}
\item At the end of the protocol, Alice knows $y_b$.
\item At the end of the protocol, Alice knows nothing about $y_{1-b}$. More precisely, if Alice is the specious adversary, then there is a simulator $S_1$ that only takes $b$ and $y_b$ as inputs such that for any distinguisher, its advantage to distinguish Alice's view and the output of the simulator is bounded above by $\epsilon$.
\item At the end of the protocol, Bob knows nothing about $b$. More precisely, if Bob is the specious adversary, then there is a simulator $S_2$ that only takes $y_0$ and $y_1$ as inputs such that for any distinguisher, its advantage to distinguish Bob's view and the output of the simulator is bounded above by $\epsilon$.
\end{enumerate}
\end{definition}

Given a quantum-secure public-key encryption scheme, an OT protocol that is $\epsilon$-secure against a specious adversary exists (see Methods). In addition, DQRE exists given a quantum-secure pseudorandom generator:
\begin{theorem}[Computational DQRE \cite{brakerski2020quantum}]
Let $\lambda$ denote the security parameter.
Assume the existence of pseudorandom generator against quantum adversary, there exists a DQRE scheme that 
has the following properties:
\begin{itemize}
\item The encoding can be computed by a $QNC_f^0$ circuit, which is a circuit of constant depth with bounded-arity gates and unbounded-arity quantum fan-out gates. 
A quantum fan-out gate performs the function $\ket{x}\ket{y_1}\cdots\ket{y_n}\rightarrow\ket{x}\ket{y_1\oplus x}\cdots\ket{y_n\oplus x}$.
The decoding can be computed in polynomial time in $\lambda$ and the circuit size $s$.
\item For any polynomial $q$, there exists a negligible function $\epsilon$ such that the scheme is $\epsilon(\lambda)$-private for any $q(\lambda)$-size circuit.
\end{itemize}
\end{theorem}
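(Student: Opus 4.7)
The plan is to construct the DQRE scheme gate-by-gate along a Clifford+$T$ decomposition of $F$, so that decomposability (one independent sub-encoding per input qubit) is inherited automatically from gate-level encodings acting on Pauli-masked wires. First I would rewrite the $s$-gate circuit as a sequence of Clifford layers interleaved with $T$ gates acting on $x$ together with $\ket{0}$ ancillas, and mask every quantum wire with a freshly sampled Pauli one-time pad. The encoding then splits into two pieces: a quantum part consisting of the Pauli-masked wires, which is generable in constant depth because Pauli masking is single-qubit; and a classical part consisting of a garbled program that tracks Pauli-key updates across gates. The classical part is computed by a constant-depth circuit that uses the assumed quantum-secure PRG together with quantum fan-out gates to broadcast key bits in parallel, so that the whole encoder sits inside $QNC_f^0$.

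For Clifford gates I would exploit the stabilizer identity $C P C^{\dagger} = P'$: conjugation of a Pauli by a Clifford yields another Pauli. Applying a Clifford in the clear on a Pauli-masked register is therefore equivalent to the honest computation up to a deterministic, efficiently computable update of the classical Pauli key. This update map is compiled into a Yao-style classical garbled circuit whose wire labels are derived from the PRG; because the entries of the garbled truth table are PRG outputs on independent seeds, they can be laid out in parallel using fan-out and produced at constant depth.

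For each $T$ gate I would use magic-state teleportation: the gate is effected by consuming a magic state $(\ket{0}+e^{i\pi/4}\ket{1})/\sqrt{2}$, performing a CNOT and a measurement, and then applying an adaptive $S$ correction. I would absorb the measurement-to-correction map into the garbling, so that the only classical information revealed in the clear is the final Pauli frame required for decoding; all intermediate control bits remain hidden inside garbled tables. Since magic states are fixed ancillas independent of $x$, per-qubit decomposability is preserved. The decoder is then the sequence ``apply the clear gate, measure the ancilla, look up the correction in the garbled table,'' and runs in time polynomial in $\lambda$ and $s$.

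The hard part will be proving $\epsilon(\lambda)$-privacy against a polynomial-size distinguisher that may additionally hold quantum side information $y$ correlated with $x$. I would proceed by a hybrid argument over the at most $q(\lambda)$ garbled sub-tables, at each hop replacing a real table by one produced by a gate-local simulator that sees only the outgoing Pauli frame, and charging each replacement to the quantum security of the PRG; the Pauli one-time pad decouples every intermediate quantum wire from the adversary. The delicate step is the quantifier order, since $y$ may be entangled with $x$: the distinguisher and its quantum advice $y$ must be fixed before the hybrid is selected, so that the reduction produces a clean non-uniform quantum attacker on the PRG. A union bound over the hybrids then converts the negligible per-hop loss into a single negligible $\epsilon(\lambda)$ holding simultaneously for every $q(\lambda)$-size circuit, which is the form required by the theorem.
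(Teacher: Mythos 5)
First, note that the paper does not actually prove this theorem: it is imported from Ref.~\cite{brakerski2020quantum}, and the Methods section only reviews that construction (gate teleportation via EPR pairs, twirling, and group-randomizing DQRE). So the relevant comparison is between your construction and the one the paper summarizes.

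Your proposal has a genuine gap in the constant-depth claim for the encoder. You mask ``every quantum wire'' with a Pauli one-time pad and inject $T$ gates via magic states, but the intermediate wires of the computation on $x$ do not exist until the circuit has been evaluated, and magic-state injection still threads the data qubit sequentially through each gate (CNOT, measurement, adaptive correction, next gate). As described, your encoder therefore has depth growing with the $T$-depth of $F$, not constant depth. The ingredient you are missing is the one the paper emphasizes: every gate's input is taken to be half of a fresh EPR pair, every gate's output is teleported onward, and all teleportation ``measurements'' are performed up front (replaced by $Z^{s}$ twirls so they become Clifford operations), with the accumulated Pauli corrections deferred into the garbled gadgets. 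Only this decoupling compresses the encoding circuit to depth one plus a constant-depth randomization layer. Relatedly, your plan to ``look up the correction in the garbled table'' does not explain how the decoder \emph{applies} the adaptive quantum $S$ (or Pauli) correction without learning the hidden intermediate Pauli frame; the construction resolves this with a correction-and-teleport Clifford circuit $C$ released in the group-randomized form $E=CR^{\dagger}$ together with $z=R(x)$ for a random (conjugated-PX/Clifford) $R$, so that $E(z)=C(x)$ while $C$ itself stays hidden. Your Yao-style classical garbling of the key-update maps and your hybrid argument over sub-tables are reasonable and broadly parallel to the classical randomized-encoding component of the real proof, but without gate teleportation and group randomization the two headline properties of the theorem (being in $QNC_f^0$ and hiding intermediate wires) are not established.
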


Based on these two primitives, Protocol~\ref{Fig:2PQC} presents the scheme of MPQC for two parties, as illustrated in Fig.~\ref{fig:twoparty}(A).
Here, the information $y_{1-b}$ is not given to Alice so that she will not know the value
$F(\cdots, 1-b, \cdots)$, which is a quantity that cannot be always obtained from $F(\cdots, b, \cdots)$ and $b$.
The security of the protocol is given in Theorem \ref{thm:twoparty}.

\begin{algorithm}
\caption{\textsc{MPQC for two parties}}
\begin{flushleft}
\textit{Input}: The two parties are called Alice and Bob, who hold $x$ and $y$ respectively.  The value they aim to compute is $F(x, y)$.
\end{flushleft}
\begin{algorithmic}[1]
\STATE
Alice sends her input qubits encrypted by quantum one-time pad (QOTP) \cite{ambainis2000private} to Bob. That is, every Alice's input qubit is applied $I, \sigma_x, \sigma_y, \sigma_z$ randomly.
\vspace{0.2cm}
\STATE
Bob generates a DQRE, which includes a garbled circuit of $G$, labels of Alice's encrypted input and QOTP key, and labels of Bob's own input. The circuit $G$ first decrypts Alice's input with the QOTP key, and evaluates the function $F$ with Alice and Bob's inputs, as illustrated in Fig.~\ref{fig:twoparty}(B). 
\vspace{0.2cm}
\STATE
Bob sends the part of DQRE that he can compute to Alice, including  $\hat{G},\widehat{Enc_k(x)},\hat{y}$. Bob in addition sends the label-value correspondence for output wires.
\vspace{0.2cm}
\STATE 
Alice and Bob perform a classical OT for each bit of Alice's QOTP key $k$, 
Alice chooses a value $b$ in $\{0,1\}$ and selects the label $y_b$ from Bob such that
$b$ is not known to Bob and Alice is ignorant of other labels, namely $y_j$ for $j\not = b$.
Alice then gets the active label $\hat{k}$ of the QOTP key $k$ without knowing the inactive labels.
\vspace{0.2cm}
\STATE Alice calculate $\hat{G}(\widehat{Enc_k(x)},\hat{y},\hat{k})$ to get $F(x,y)$, as illustrated in Fig.~\ref{fig:twoparty}(C).
\vspace{0.2cm}
\STATE Alice sends the result $F(x,y)$ to Bob.
\end{algorithmic}
\label{Fig:2PQC}
\end{algorithm}

\begin{theorem}
\label{thm:twoparty}
Assuming the existence of a quantum-secure public-key encryption scheme and a quantum-secure pseudorandom number generator, Protocol 1 achieves constant-round two-party quantum computation.
\end{theorem}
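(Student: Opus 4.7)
The plan is to verify three things for Protocol 1: constant round complexity, correctness, and $\epsilon$-privacy against a specious adversary corrupting either party. Round complexity is immediate---Alice sends $Enc_k(x)$, Bob sends the DQRE bundle, the $|k|$ OTs run in parallel (each constant-round under the assumed quantum-secure PKE), and Alice sends the final answer---a constant number of rounds in total. Correctness follows from the DQRE decoding property: $\hat{G}$ evaluated on $(\widehat{Enc_k(x)},\hat{y},\hat{k})$ yields $G(Enc_k(x),k,y) = F(Dec_k(Enc_k(x)),y) = F(x,y)$, which Alice then forwards to Bob.

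For privacy I would construct a simulator for each of the two possible corruptions and bound the distinguishing advantage by a hybrid argument. A specious Bob observes the QOTP ciphertext $Enc_k(x)$, his own side of each OT transcript, and the final output $F(x,y)$. Because the QOTP key $k$ is uniformly random and unknown to Bob, $Enc_k(x)$ is the maximally mixed state from his viewpoint; hence the simulator $S_B(y,F(x,y))$ forwards a maximally mixed register in place of $Enc_k(x)$, invokes the OT simulator $S_2$ on each of Bob's locally generated label pairs, and delivers $F(x,y)$ at the end. Replacing one OT transcript at a time by its simulated counterpart bounds the distinguishing advantage by $|k|\epsilon_{\mathrm{OT}}$.

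A specious Alice has the richer view $(x,k,\hat{G},\widehat{Enc_k(x)},\hat{y},\hat{k},\text{OT transcripts},F(x,y))$, handled by $S_A(x,F(x,y))$: it picks a fresh key $k$, produces $Enc_k(x)$ honestly, invokes the DQRE simulator $Sim$ on input $F(x,y)$ to obtain a simulated bundle together with active labels $\hat{k}$, and simulates each OT transcript via $S_1$ using the known bits of $k$ and the corresponding simulated labels. Two hybrid transitions suffice: first swap each honest OT transcript for a simulated one (cost $|k|\epsilon_{\mathrm{OT}}$), then swap the honestly generated DQRE bundle and labels for $Sim(F(x,y))$ (cost $\epsilon_{\mathrm{DQRE}}$), giving an overall bound $O(|k|\epsilon_{\mathrm{OT}}+\epsilon_{\mathrm{DQRE}})$ that is negligible under the stated assumptions.

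The main obstacle I anticipate is making this second transition rigorous, because DQRE privacy is phrased for a bundle together with its implicit active labels, whereas in Protocol 1 the active labels for the QOTP key bits reach Alice through an OT rather than directly. I would bridge the gap via an intermediate hybrid in which the OT is short-circuited and the active labels are delivered in the clear, at which point the replacement of the bundle and labels by $Sim(F(x,y))$ matches the setting of the DQRE definition exactly, and the side-information clause of that definition accommodates conditioning on Alice's private $x$ and $k$. Once this composition is handled, privacy against both corruptions follows, and the theorem is established.
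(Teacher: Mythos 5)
Your proposal is correct and follows essentially the same route as the paper's proof: the same simulators (maximally mixed/QOTP argument plus $S_2$ for Bob; DQRE simulator plus $S_1$ for Alice) and the same two-step hybrid for Alice, first replacing the OT transcripts and then the garbled bundle, yielding a bound of the form $\epsilon_{\mathrm{OT}}+\epsilon_{\mathrm{DQRE}}$. Your explicit per-OT accounting ($|k|\epsilon_{\mathrm{OT}}$) and the intermediate hybrid delivering the active labels in the clear are slightly more careful than the paper's single-step treatment, but they do not change the argument.
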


\begin{proof}[Proof sketch]
Let us consider Bob first. The only inputs that Bob receives from Alice 
is Alice's encrypted inputs in the first round, and the messages
from Alice during the OT protocol.  By the definition of OT, Bob 
knows nothing about Alice during OT. In addition,  it can be shown that 
the encrypted inputs from Alice in the first round also contains no information (see Supplementary Materials).
Bob's simulator is as follows. In the first step, the simulator outputs random quantum strings.
In the second step, the simulator just outputs what the simulator $S_2$ in OT outputs. 
The case of Alice is more complex and is shown in Supplementary Materials.

\end{proof}

\subsect{MPQC for Multiple Parties}
\begin{figure*}[htb]
\centering \includegraphics[width=15cm]{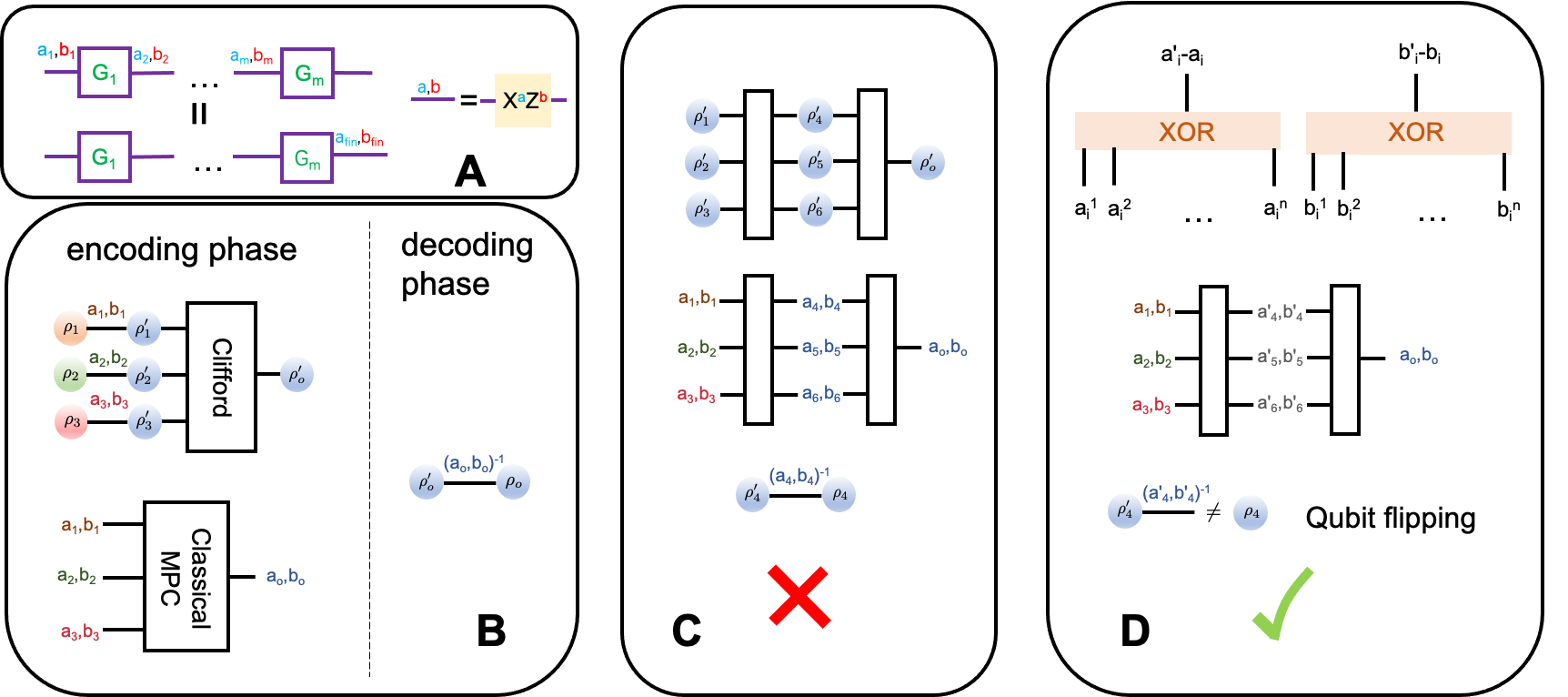}
\caption{{\bf MPQC for multiple parties and a Clifford circuit. (A)} Property of Clifford circuits. If the gates $G_1,\cdots, G_m$ are Clifford, the operations $X^{a_i} Z^{b_i} $ can be deferred to the end of the circuit. {\bf (B)} By the property of a Clifford circuit, its garbled program can consist of a quantum state $\rho'_o$ and a classical correction $a_o,b_o$. Here $\rho'_o$ is obtained by a two-step process. In the first step, party $i$ uses a QOTP key $a_i, b_i$ to encrypts his quantum input $\rho_i$ to $\rho_i'$ and sends $\rho_i'$ to party 1. In the second step, party 1 puts $\rho'_1,\cdots, \rho'_n$ to the Clifford circuit and calculates $\rho'_o$. The correction $a_o, b_o$ can be calculated through a classical MPC with inputs $a_i, b_i, 1\le i \le n$. In the evaluation, party 1 simply applies the correction $a_o, b_o$ on the quantum state $\rho'_o$ to obtain $\rho_o$. {\bf (C)} Suppose the Clifford circuit consists of two layers, and the quantum states between the layers if measured are $\rho'_4, \rho'_5, \rho'_6$. If the correction $a_4, b_4$ is obtained by the adversary, then the value $\rho_4$, which is the value between the layers when the Clifford circuit takes $\rho_1, \rho_2, \rho_3$ as inputs, is revealed by correcting  $\rho'_4$ with $a_4, b_4$. This is undesirable as it gives the adversary extra information on the users' private inputs. {\bf (D)} To resolve the previous issue, we apply a ``qubit flipping'' operation to each intermediate quantum state $\rho_i$ for each intermediate wire $\omega_i$. The parameters of qubit flipping $a'_i-a_i$ and $b'_i - b_i$ are obtained by XORing each party's local random bit $a_i^j$ and $b_i^j$ in a MPC way. After qubit flipping is applied, $a_i, b_i$ equivalently becomes $a_i', b_i'$ and can no longer retrieve any intermediate quantum value $\rho_i$.} 
\label{fig:clifford}
\end{figure*}
Next, we turn to the multi-party case. We note that if we use a simple extension of the two-party protocol
for multiple parties, namely one of the parties $A$ generates the garbled program and one of the parties  $B$ acts as the evaluator
to evaluate the outcome, then the privacy is broken. Indeed, if $A$ and $B$ collude, by the fact that $A$ knows all the correspondence between the wire values and the wire labels, and the fact that $B$ knows all wire labels, they can recover the input values of all parties!
To overcome this attack, instead of generating the garbled circuit 
by a single party, all parties should participate in the generation of the garbled circuit. 

Hence, we need to use a MPQC to generate the garbled circuit, but the round number of this MPQC does not need to be  a constant.  For example, we can utilize the following MPQC construction which has a round number linear in the circuit depth.
\begin{theorem}[MPQC with round number linear w.r.t. the circuit depth \cite{dulek2020secure}]
Assume the existence of a classical MPC secure against quantum adversaries, there exists a MPQC secure against quantum adversaries that has a round number $O(nd)$, where $n$ is the number of parties and $d$ is the circuit depth.
\end{theorem}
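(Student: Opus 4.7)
The plan is to build the MPQC by evaluating the target circuit layer by layer on a quantum-one-time-pad (QOTP) encrypted register, and to use the assumed classical MPC only to track the Pauli keys and to perform measurement-dependent corrections. Concretely, each party $i$ first applies a fresh QOTP $X^{a_i} Z^{b_i}$ to every qubit of its quantum input $x_i$ and sends the encrypted qubits to a designated evaluator, while the classical bits $(a_i, b_i)$ are fed into the classical MPC functionality and kept secret-shared throughout the protocol. Since a QOTP-encrypted qubit is maximally mixed from the evaluator's point of view, this initial hand-off leaks no quantum information, and by assumption the classical key shares leak nothing either.

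Next, for each layer of the circuit I would distinguish two cases. Clifford gates can be applied directly on the encrypted register, because the Clifford group normalises the Pauli group; the induced update rule on the shared Pauli keys is a linear Boolean function that can be evaluated in one round of classical MPC. Non-Clifford gates, for example $T$ gates, are handled through a magic-state injection gadget: the parties use classical MPC and local state preparation to jointly produce an authenticated magic state, the evaluator applies the standard injection circuit (a CNOT followed by a measurement on the encrypted register), and the classical measurement outcome is fed back into the classical MPC, which decides whether an $S$-correction is needed and updates the Pauli keys accordingly. Each such non-Clifford layer costs only a constant number of classical MPC invocations.

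After all $d$ layers, the evaluator measures every output wire and the parties run one more classical MPC to strip off the remaining Pauli mask, so that every party recovers the cleartext output $F(x_1, \ldots, x_n)$. Counting rounds, each of the $d$ depth layers triggers $O(1)$ classical MPC calls, and a classical MPC among $n$ parties contributes an $O(n)$ factor to the round count, giving the claimed $O(nd)$ bound. Privacy follows by a standard simulation argument: the only quantum data any party ever touches is either its own input or a QOTP-encrypted register whose key is hidden inside the classical MPC, so the quantum-secure simulator for the classical MPC lifts to a simulator for the whole protocol that uses only the corrupted parties' inputs together with the final output.

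The main obstacle will be the non-Clifford gadget. One has to guarantee that preparing, distributing and injecting the magic state is (i) correct on honest executions, (ii) never leaks an intermediate Pauli frame bit to any party or coalition, since such a bit combined with the encrypted intermediate register would reveal a plaintext intermediate quantum value, and (iii) safely correctable when an adversary deviates within the specious (or stronger) model. Packaging all of these sub-steps into a constant number of classical MPC rounds per layer, and matching them with a simulation-based privacy proof, is the technical heart of the construction and the step at which the linear-in-$d$ scaling of the quantum round complexity becomes unavoidable.
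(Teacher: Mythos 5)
This theorem is not proved in the paper at all: it is imported verbatim from Ref.~\cite{dulek2020secure} and used as a black-box subroutine, so there is no in-paper argument to compare your proposal against. What you have written is a from-scratch sketch of the standard architecture that does underlie the cited construction --- quantum-one-time-pad--encrypted evaluation by a designated evaluator, a quantum-secure classical MPC that keeps the Pauli keys secret-shared and updates them through each Clifford layer, and magic-state injection with measurement feedback for the $T$ gates --- so at the level of overall strategy you are on the same track as the work being cited.

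As a proof, however, the proposal has genuine gaps, and you flag the main one yourself. First, the cited result is security against general quantum adversaries (a dishonest majority), not merely specious ones; for that, a bare QOTP on a register held by a single evaluator is insufficient, because a deviating evaluator can apply undetected Pauli (or worse) attacks. The actual construction wraps every qubit in a Clifford authentication code and interleaves publicly verifiable tests, and this authentication machinery is also where the $O(n)$ factor per layer really comes from (each of the $n$ parties participates sequentially in encoding/testing), rather than from the round count of the classical MPC itself, which one normally takes to be constant-round. Second, the magic-state gadget --- jointly preparing a state that all parties trust, injecting it without revealing any intermediate Pauli-frame bit, and feeding the measurement outcome back into the MPC --- is precisely the technical heart, and your proposal leaves it as an acknowledged obstacle rather than resolving it. Third, the simulation argument needs to handle the adaptive, measurement-dependent corrections (the simulator must produce a consistent transcript of injection outcomes without knowing the honest inputs), which is not automatic from ``the key is hidden inside the classical MPC.'' In short: right skeleton, same route as the literature, but what you have is a plan whose hardest step is still open, not a proof.
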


Combined with the result from Ref.~\cite{agarwal2020post}, we have the following theorem.
\begin{theorem}[MPQC with weaker assumptions \cite{dulek2020secure,agarwal2020post}]
\label{thm:multipre}
Assuming super-polynomial quantum hardness of LWE and quantum AFS-spooky encryption \footnote{We refer the readers to Ref.~\cite{agarwal2020post} for the definitions of LWE and quantum AFS-spooky encryption.}, there exists a MPQC secure against quantum adversaries that has a round number $O(nd)$, where $n$ is the number of parties and $d$ is the circuit depth.
\end{theorem}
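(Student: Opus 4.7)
The plan is to compose the two cited results: use the Dulek et al. reduction to lift a classical MPC secure against quantum adversaries into a quantum MPQC of round complexity $O(nd)$, and instantiate the required classical MPC via the Agarwal et al. construction based on quantum LWE and quantum AFS-spooky encryption. Concretely, the preceding theorem already guarantees an MPQC protocol with $O(nd)$ rounds given any classical MPC secure against quantum adversaries, so the task reduces to producing such a classical MPC under the stated assumptions.

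For the classical component, I would invoke Agarwal et al., whose main construction yields a classical MPC protocol secure against quantum adversaries, assuming super-polynomial quantum hardness of LWE together with a quantum AFS-spooky encryption scheme. Substituting this protocol as the subroutine inside the Dulek et al. framework produces an MPQC whose only cryptographic assumptions are exactly the two appearing in the theorem statement.

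The main obstacle is verifying compatibility between the two black boxes. One has to check that the security flavor (adversary model, simulation guarantees, abort conventions) provided by Agarwal et al. is exactly what the Dulek et al. reduction consumes; that the specific reactive functionalities invoked inside Dulek et al., such as joint sampling of Pauli one-time-pad keys, generating encodings of magic states, and updating encrypted keys across Clifford and $T$ layers, all fall within the class of functionalities handled by Agarwal et al.; and that the round count of the classical sub-protocol is independent of $d$ so that composition does not inflate the overall complexity beyond $O(nd)$. These are essentially bookkeeping verifications rather than new conceptual steps, so once they are settled the theorem follows by direct composition of the two results.
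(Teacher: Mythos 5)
Your proposal matches the paper's argument exactly: the paper proves this theorem simply by taking the preceding result (MPQC with $O(nd)$ rounds given any classical MPC secure against quantum adversaries) and instantiating the classical MPC with the Agarwal et al.\ construction under the stated LWE and quantum AFS-spooky encryption assumptions. Your additional compatibility checks are sensible bookkeeping but the paper itself offers no more detail than the direct composition.
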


Now we are ready to present our MPQC protocol in the multi-party case. Its high-level description is shown in Protocol~\ref{alg:MPQC}.
For the distributed computation of the garbled program, let us consider a Clifford circuit first. The construction utilizes the following crucial property of Clifford circuits. For single-qubit Clifford gates $G_1, G_2, \cdots, G_m$, we have that for arbitrary $a_i,b_i \in \{0,1\}, 1\le i \le m$, there exist $a_{fin}, b_{fin}$ such that
\begin{equation}
G_m X^{a_m}Z^{b_m} \cdots   G_1 X^{a_1}Z^{b_1} = X^{a_{fin}}Z^{b_{fin}}G_m \cdots G_1,
\end{equation}
which is illustrated in Fig.~\ref{fig:clifford}(A). In other words, the Pauli operations can be deferred to the end of the circuit after all other Clifford operations are performed. 
The same holds for multi-qubit Clifford gates, with single-qubit Pauli operations replaced by tensor products of Pauli operations.

With this Clifford property, we can design the following MPQC for a Clifford circuit. Suppose the $n$ parties holds $n$ states $\rho_1, \cdots, \rho_n$ respectively. Each state $\rho_i$ is first transformed by a QOTP key $(a_i,b_i)$, namely
\begin{equation}
\rho_i' = X^{a_i}Z^{b_i}\rho_i.
\end{equation}
The resulting states are sent to one of the parties, which feeds $\rho_1', \cdots, \rho_n'$ to the Clifford circuit and obtains $\rho_o'$. Only one round of communication is needed for this step. According to the Clifford property, this masked output $\rho_o'$ differs from the true output $\rho_o$ only by a Pauli operation $X^{a_{o}} Z^{b_{o}}$.
Therefore, we let the final output of the garbled program be a quantum state $\rho_o'$ and a correction $a_{o},b_{o}$ to $\rho_o'$. Here, $\rho_o'$ is random without knowing the values of $a_{o}, b_{o}$, and the values $a_{o}, b_{o}$ can be computed through a classical MPC using the values $(a_i,b_i), 1\le i\le n$.  The scheme is illustrated in Fig.~\ref{fig:clifford}(B).

\begin{algorithm}
\caption{\textsc{MPQC for multiple parties}}
\begin{flushleft}
\textit{Input}: $n$ parties hold quantum inputs $x_1,x_2,\cdots, x_n$ respectively.  The value to compute is $F(x_1,x_2,\cdots, x_n)$. 
\end{flushleft}
\begin{algorithmic}[1]
\STATE
Each party first encrypts his/her input by a QOTP,  and sends one copy to all other parties.
\vspace{0.2cm}
\STATE
These parties generate a garbled circuit and associated wire labels for the following function $H$ 
in a distributed fashion, using a nonconstant-round MPQC protocol such as the one in Theorem \ref{thm:multipre}. The function $H$ first decrypts each party's encrypted input using the corresponding QOTP key,
and then performs the quantum operation $F$ on the inputs.
%This is a highly parallel operation and can be finished in constant rounds.
In the end, each party gets the garbled circuit and the  input labels.
\vspace{0.2cm}
\STATE
Given the   input labels and the garbled gates, each party evaluates on its own following the topology of the circuit and obtains 
$F(x_1,x_2,\cdots, x_n)$.
\end{algorithmic}
\label{alg:MPQC}
\end{algorithm}

\begin{figure*}[htb]
\centering \includegraphics[width=17cm]{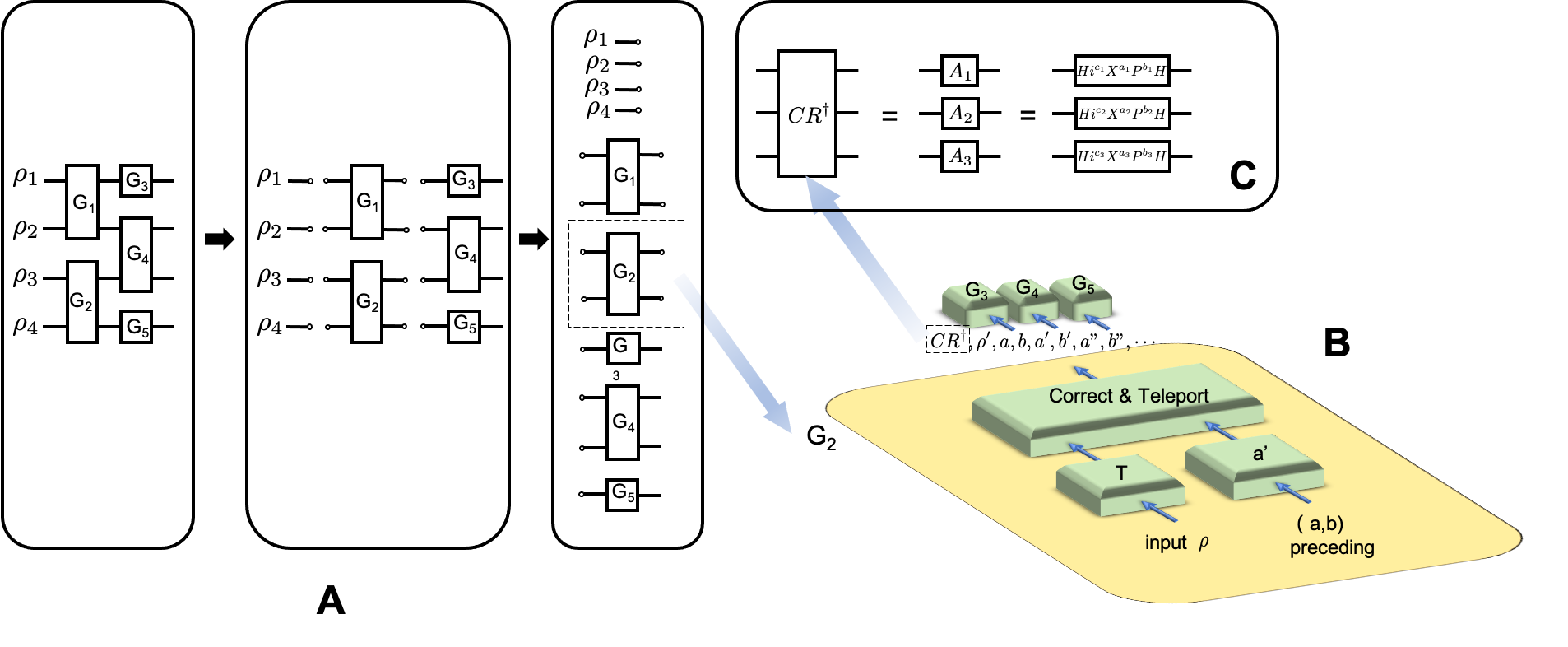}
\caption{{\bf MPQC for multiple parties and a general circuit. (A)} Compiler of a general circuit. The output of a gate is teleported to the input of another gate. Each input is viewed as one of the outputs of an initial gate $G_0$. Then the encryption of a circuit of arbitrary depth can be compressed to depth 1 by deferring the teleportation process. 
Note that the decryption process, which requires no communication between the parties, still proceeds in a sequential way. 
{\bf (B)} The detailed procedure for the encryption of a gate $G$. It takes a quantum state $\rho$ and labels of the correction values of preceding gates $(a,b)'s$ as inputs. It applies $G$ directly on the quantum input $\rho$ to obtain $\rho'$. Then a correction-and-teleport gadget takes $\rho'$ and $(a,b)'s$ as inputs, applies a correction on $\rho'$ according to $(a,b)'s$, and then teleports the corrected quantum state. Since the correction can only be obtained at the decryption stage, it is in the form of $C R^\dagger$. The encryption of $G$ also outputs one copy of its own teleportation correction $a,b$ for each subsequent gate. {\bf (C)} The quantum operation $CR^\dagger$ can be decomposed into a tensor product of single qubit gates. Each single qubit gate can be further represented as $H i^{c_i} X^{a_i} P^{b_i} H$. Here, $a_i,b_i,c_i$ are masked by qubit flipping similar to the Clifford case.} 
\label{fig:compiler}
\end{figure*}

Not all parts of this construction survives to a general circuit.
However, one of the parts still plays an important role in general circuits. We call this part ``\emph{qubit flipping}'' technique. To understand 
this technique, let us consider a depth-2 Clifford circuit and expand the MPQC construction for the Clifford circuit in this scenario. 
By the expansion, the classical MPC will also contain two layers. As shown in Fig.~\ref{fig:clifford}(C), let $\rho_4', \rho_5', \rho_6'$ be the quantum state between the two layers of the Clifford circuit and 
let $(a_4, b_4), (a_5, b_5), (a_6, b_6)$ be the correction values between the two layers in the classical MPC. In an oversimplified implementation of the classical MPC, acting $(X^{a_4}Z^{ b_4})^{-1}$ on $\rho_4'$ recovers $\rho_4$, which is the first qubit between the two layers if the Clifford circuit were inputted $\rho_1, \rho_2, \rho_3$. However, $\rho_4$ reveals extra information about the original quantum inputs other than the final output $\rho_o$, hence this implementation fails. 

To resolve this problem, in classical MPC, a flip bit $a_i'-a_i$ is added to the value $a_i$ for  every intermediate wire $i$ (see Methods for a full description of classical MPC), as shown in Fig.~\ref{fig:clifford}(D). The flip bit $a_i'-a_i$ is obtained by XORing the shares of the flip bit $a_i^j$ from all parties. A similar flip bit is added to $b_i$. Then $\rho_4$ can no longer be obtained from $\rho_4'$ and $(a_4',b_4')$. Equivalently, we can view the $n$ parties performed a ``qubit flipping'' operation  $X^{a_i-a_i'} Z^{b_i-b_i'}$  on each intermediate-wire quantum state $X^{a_i'}Z^{b_i'} \rho_i$ to hide the semantic value of this non-output quantum wire. 
We will utilize this technique again in the protocol for general circuits. 

%In this case, to hide the semantic value of each non-output quantum wire, each quantum wire is applied a Pauli operation $X^a Z^b$ jointly by $n$ parties, namely $X^a Z^b = X^{a_1+a_2+\cdots+a_n} Z^{b_1+b_2+\cdots+b_n}$. And for each classical wire, it is applied a flip bit $x_1\oplus \cdots \oplus x_n$, where $x_i$ is party $i$'s share of the flip bit. Recall that in the two-party case, these steps are done by Bob alone. We now show why this construction protects the private inputs of the $n$ parties. 
% By the construction of $(a,b)$, its value is not known to any subset of the $n$ parties. Hence, no extra information on the private inputs of the parties can be inferred except what can be inferred by $a_{o}, b_{o}$ alone. 

Now let us consider a general circuit that can perform universal quantum computation.
Note that Clifford gates alone are insufficient to achieve universal quantum computation. To achieve universal quantum computation, $T$ gates are additionally needed.
However, $T$ gates make the garbled program more complex, as $T X^a Z^b = X^{a'}Z^{b'} T$ does not hold for all  $(a,b)$.
Hence, we can no longer put the encrypted inputs $\rho_1', \cdots, \rho_n'$ into the circuit, obtain the output, and later perform Pauli corrections.
In order to maintain constant rounds of communication, the critical idea here is to decouple multiple gates in the circuits through the use of EPR pairs and quantum teleportation.
As seen in Fig.~\ref{fig:compiler}(A), firstly, for each pair of gates $G_i,G_j$ where an output qubit of $G_i$ is the input qubit of $G_j$, we teleport the output of $G_i$ to the input of $G_j$ using an EPR pair. The inputs $\rho_1,\cdots,\rho_n$  are viewed as outputs of a  virtual gate $G_0$ and are handled similarly to pair of gates. Then, we can compress the circuit into a depth one circuit.

For each gate $G_i$, it takes half of the EPR pair $\rho$ as input together with the teleportation corrections $a,b$ of the preceding gates. Consider the most complicated case $G_i =T$, as illustrated in Fig.~\ref{fig:compiler}(B). The input $\rho$ goes through the gate $T$ first and becomes $\rho_T$. Next the quantum state goes through a deferred correction due to previous teleportation, and is then teleported to the next gate. For the correction, teleportation corrections $a,b$ of the preceding gates are required in addition to the state $\rho_T$. Here, the correction-and-teleport Clifford circuit $C$ is implemented through a group-randomizing DQRE. 

The purpose of this group-randomizing DQRE is to hide the semantic quantum values of the wires. It consists of a tensor product  of single qubit randomizers, denoted by $R$, on the state $\rho_T$ so that $\rho'=R(\rho_T)$, where each single qubit randomizer comes from the PX group. It also consists of a classical description of $CR^\dagger$ (note that $CR^\dagger(\rho') = C(\rho_T)$). In short, the output of the DQRE for the correction-and-teleport circuit $C$ mainly consists of $\rho'$ and $CR^\dagger$. In addition, the output also consists of the correction value for the teleportation of $G_i$ itself. We provide this correction value for successor gates to $G_i$.

We now adapt this group-randomizing DQRE for a correction-and-teleport Clifford circuit so that it is not generated by Bob alone, but jointly by all $n$ parties. Since each PX group element $i^c X^a P^b$ can be determined by three classical numbers $a\in \{0,1\}$ and $b,c\in \{0,1,2,3\}$, we let the $n$ parties perform five joint XOR operations to determine these five classical bits for each qubit, and then apply the corresponding $R$ on the quantum state to obtain $\rho'$ during gate encryption. Crucially, the encryption process is still of constant quantum circuit depth.
The decoding classical operation $CR^\dagger$ is adjusted similar to $R$, with a slight difference that it is a tensor product of conjugated PX group elements, which are of the form $Hi^c X^a P^bH$. 
 An illustration is shown in Fig.~\ref{fig:compiler}(C).

%Question: why conjugated PX element, not PX element?   Answer: this is because of the correction gadget requires conjugated PX element. The Hadamard gate comes from the teleportation process.

The security of this multi-party protocol is given in the following theorem:
\begin{theorem}
\label{thm:multiparty}
Protocol 2 compiles a MPQC scheme of which the round number is circuit-depth dependent to a MPQC scheme with round number independent of the circuit depth.
\end{theorem}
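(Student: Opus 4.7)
The plan is to split the analysis into a round-count argument and a privacy reduction. For the round count, I would inspect each step of Protocol 2 separately. Step 1 is a single round of QOTP broadcasts, Step 3 is purely local, and the only potentially expensive component is the distributed generation of the garbled program in Step 2. The key observation is that, by Theorem 1, the DQRE encoding circuit lies in $QNC_f^0$; combined with the compiler depicted in Fig.~\ref{fig:compiler} that flattens every teleportation-chained sequence of gates into a depth-one structure, the circuit actually executed inside the subroutine of Theorem~\ref{thm:multipre} has constant depth $d' = O(1)$, independent of the depth $d$ of the target operation $F$. Since that subroutine incurs $O(nd')$ rounds, the overall round count of Protocol 2 is independent of $d$, as claimed.

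For privacy, I would construct, for each coalition $A$ of specious parties, a simulator through a hybrid sequence. $H_0$ is the real view of $A$. In $H_1$, I replace the QOTP-encrypted qubits broadcast by honest parties in Step 1 with maximally mixed states; this transition is information-theoretically identical by the perfect secrecy of the quantum one-time pad. In $H_2$, I replace the distributed garbled-circuit generation of Step 2 by the simulator guaranteed by the privacy of the underlying MPQC of Theorem~\ref{thm:multipre}, reducing $A$'s transcript to the garbled program together with its own input labels. In $H_3$, I invoke the $\epsilon$-privacy of DQRE (Theorem 1) with $A$'s inputs as side information, replacing the garbled program and labels by $Sim(F(x_1,\ldots,x_n))$. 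The final hybrid depends only on $A$'s inputs and the output $F(x_1,\ldots,x_n)$, and correctness of the output follows from the DQRE decoding guarantee.

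The main obstacle I anticipate is justifying the invocation of DQRE privacy in $H_3$ when the encoding is produced in a distributed manner rather than by a single honest party. This is precisely where the qubit flipping construction is load-bearing: every intermediate-wire Pauli-correction mask $(a_i,b_i)$ is an XOR of shares contributed by all parties, so as long as at least one party is honest its share randomizes the effective mask from $A$'s viewpoint. I would insert a sub-hybrid that replaces each honest party's share with a fresh random bit fed to a centralized virtual encoder, arguing indistinguishability from the security of the classical MPC used for mask aggregation; this reduces the situation to the single-encoder setting in which Theorem 1 applies verbatim. A small side calculation is also needed to confirm that the conjugated PX-group mask $H\, i^{c} X^{a} P^{b} H$ used in the correction-and-teleport gadget forms a single-qubit unitary 1-design, which ensures that any intermediate state $\rho'$ observed by $A$ is maximally mixed in the absence of its classical correction descriptor, thereby hiding the semantic value on every non-output wire.
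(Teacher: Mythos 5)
Your proposal is correct and follows essentially the same route as the paper: the round count is bounded by observing that the garbled-program generation (the masking MPC plus the $QNC_f^0$ encoding of the teleportation-flattened circuit) is a constant-depth computation fed into the depth-dependent MPQC of Theorem~\ref{thm:multipre}, and privacy rests on DQRE privacy together with the fact that the qubit-flipping shares of a single honest party decouple active labels from semantic wire values. Your hybrid sequence ($H_1$--$H_3$, plus the sub-hybrid reducing the distributed encoder to a centralized one) is a more explicit formalization of the two-step informal argument the paper gives in its Supplementary Materials, and the paper's own sketch would benefit from it.
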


\begin{proof}[Proof sketch]
For a general quantum circuit, the parties mask each quantum wire (including input wire) through an element from the  PX group. The masking proceeds in two steps. In the first step, the $n$ parties XOR their values to determine five classical bits. This is a classical MPC and takes constant rounds. In the second step, the $n$ parties use these values to group randomize the wires according to $i^c X^a P^b$. This is a constant-depth quantum circuit, and hence by the assumption of the theorem, can be computed in constant rounds. 
By the property of DQRE, the rest part of the garbled program can also be computed in constant depth and hence can be computed in constant rounds by a MPQC scheme of which the round number is circuit-depth dependent. This finishes the part of proof for round complexity.
The details for the proof of security are deferred to Supplementary Materials.
\end{proof}

By Theorem \ref{thm:multiparty}, we have the following corollary:
\begin{corollary}
\label{cor:multiparty}
Assuming the existence of a secure multi-party quantum computation scheme with round number only as a function of the circuit depth, Protocol 2 is a constant-round multi-party quantum computation.
\end{corollary}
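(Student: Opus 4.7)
The plan is to invoke Theorem \ref{thm:multiparty} and then verify that the compiled construction in Protocol 2 actually has a constant round count. I would first observe that Protocol 2 contains only one step where the round count can blow up, namely Step 2 (the distributed generation of the garbled program); Step 1 is a single broadcast round and Step 3 is a purely local evaluation. So the entire corollary reduces to a round bound for Step 2.

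For Step 2, the subroutine is the assumed MPQC scheme whose round complexity is some function $R(n, d')$ of the number of parties $n$ and the depth $d'$ of the circuit being evaluated (for the explicit instance in Theorem \ref{thm:multipre}, $R(n, d') = O(n d')$). So I would reduce the corollary to showing that $d'$, the depth of the circuit that produces the garbled program, is bounded by an absolute constant, independent of the depth of the target circuit $F$. The Computational DQRE theorem already guarantees this for a single party, since the encoder lies in $QNC_f^0$. The body of Protocol 2 adapts that single-party encoder into a distributed one while preserving constant depth: the QOTP keys, the qubit-flipping masks on every intermediate wire, and the five classical bits parameterizing each PX-group randomizer are all sampled as XORs of one local share per party, which is constant-depth classical MPC executed in parallel across all wires. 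Applying $R(\cdot,\cdot)$ to $d' = O(1)$ and the fixed $n$ of the paper's setting then yields $R(n, d') = O(1)$, and adding Steps 1 and 3 keeps the total round count constant.

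The hardest point, and where I would concentrate the actual verification, is making rigorous that once the general circuit $F$ has been compressed to depth one by the teleportation gadget of Fig.~\ref{fig:compiler}(A), the distributed generation of every wire's mask and every gate's correction-and-teleport DQRE really does fit into a single constant-depth computation executed in parallel over all gates of $F$, rather than into a circuit whose depth secretly scales with the size of $F$. The key sub-claim to nail down is that the joint XORs setting up the qubit-flipping masks commute across gates and wires and can therefore be performed in one parallel layer, and that the per-gate group-randomizing DQRE is itself constant depth by the $QNC_f^0$ guarantee. Once this parallelism is justified, the round-complexity claim is immediate, and the security claim is inherited from Theorem \ref{thm:multiparty}, so the corollary follows by a one-line substitution.
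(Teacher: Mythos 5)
Your proposal is correct and follows essentially the same route as the paper: the corollary is obtained by combining Theorem~\ref{thm:multiparty} with the observation that the garbled-program generation (Step 2 of Protocol 2) is a constant-depth circuit --- constant-depth classical MPC for the XOR-shared masks plus the $QNC_f^0$ DQRE encoder --- so the assumed depth-dependent-round MPQC runs in constant rounds on it. The only cosmetic difference is that you absorb the $O(nd')$ factor by fixing $n$, whereas the corollary's hypothesis (round number a function of depth only) already removes the $n$-dependence; both suffice in the paper's constant-party setting.
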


\sect{Discussion}
\label{sec:discuss}
We have shown that constant-round two-party quantum computation and constant-round multi-party quantum computation are possible under mild assumptions by designing the first protocols that achieve these goals. We have also provided detailed security analysis for these protocols. By substantially reducing the requirement on the round number for MPQC, our work paves the way towards practical multi-party quantum computation.

As direct applications, our work gives the first constant-round privacy-preserving schemes for
many natural distributed quantum computing tasks, including:
\begin{enumerate}
\item
\emph{Quantum fidelity evaluation:} Two parties wish to compute the fidelity between their two quantum states, but 
do not wish to reveal their states. 
\item
\emph{Quantum machine learning:} Multiple parties each has some labeled samples for quantum machine learning. However, they wish to learn the quantum model without revealing their private labeled samples which are costly to obtain.
\item
\emph{Quantum voting problem:} Each user $i$ generates a quantum state $q_i$ which is a superposition of $N$ candidates that he/she wishes to select where the amplitude of a candidate represents the user's inclination to this candidate. The users wish to jointly determine the candidate that has the largest fidelity with $\sum_i q_i$, without revealing their individual preferences.
Quantum voting is superior to classical voting, in the sense that the communication cost of quantum voting is reduced exponentially from $N$ to $\log N$
compared to its classical counterpart.
\item
\emph{Quantum pairing problem:} $2n$ graduate students are suddenly told that they need to pair themselves into $n$ two-person dormitory rooms. Each student $i$ has a quantum state $\rho_i$ that characterizes the various aspects of his personal habits, such as sleep time, tolerable noise level, etc.  Since a pair of students with similar habits may get along better, the students wish to maximize $\sum_{\{i,j\}\in P} F(\rho_i, \rho_j)$ where $P$ is the pairing. They also wish to hide the quantum state that describes their personal habits. 
\end{enumerate}

Our work opens a few interesting avenues for future research. First, our work requires the number of parties to be fixed. It remains to investigate whether a constant-round MPQC exists in the case that the number of parties increases with the problem size.
One possible route is to improve the protocol in \cite{dulek2020secure} so that its round number becomes independent of the number of parties, thereby removing the assumption in Corollary \ref{cor:multiparty} and making it an unconditional statement.
Secondly, the question of whether constant-round multi-party quantum computation is possible against malicious adversaries is still open. Intuitively, the answer to this question would be yes, as its classical counterpart has an affirmative answer. Thirdly, it would be fruitful to study concrete number of rounds for broadcast quantum channels and point-to-point quantum channels. In the classical case, it is known that two rounds suffice for multi-party computation with broadcast channels and three rounds suffice for point-to-point channels \cite{cohen2020broadcast}. Finally, an experimental demonstration for constant-round multi-party quantum computation would expedite its practical deployment, and is thus worth persuing. 

\vspace{0.5cm}
\subsect{Acknowledgements}
This work was supported by the internal Grant No. SLH00202007 from East China University of Science and Technology.

{\it Note added.}---After submission of the manuscript, we became aware of related works by Bartusek {\it et al.} \cite{bartusek2020round}  and Alon {\it et al.}  \cite{cryptoeprint:2020:1464}.

\vspace{1cm}
\noindent{{\Large\bf Materials and Methods}}\\

\subsect{Decomposable quantum random encoding}
\label{AppSec:QRE}
In this section, we review a DQRE scheme  \cite{brakerski2020quantum} for the convenience of the reader.
It is based on three ingredients: quantum computation via teleportation; twirling;
and group-randomizing DQRE.

First consider a circuit that only consists of Clifford gates.
We note the fact that the computation of the circuit can be performed through teleportation. 
To illustrate how this is possible, let us consider a circuit that consists of two 
gates $G_1$ and $G_2$, and the output of $G_1$ is the input of $G_2$. 
We can teleport the output of $G_1$ to the input of $G_2$ through teleportation using an EPR pair.
Recall that in teleportation, the result is masked by $X^aZ^b$ and hence needs to be corrected
according to the measurement outcomes. If no correction is made and $G_2$ is applied directly, the 
overall operation is then $G_2X^aZ^b$. If $G_2$ is a Clifford gate, we can defer the correction to the 
end as $G_2X^aZ^b = X^{a'}Z^{b'}G_2$. This circuit can be easily 
generalized to multiple gates, and the correction becomes 
\begin{eqnarray}
&&G_nX^{a_{n-1}}Z^{b_{n-1}}\cdots G_2X^{a_1}Z^{b_1}G_1  \nonumber \\
&=& X^{a_{fin}}Z^{b_{fin}}G_nG_{n-1}\cdots G_1.
\end{eqnarray}
  DQRE is then reduced to a classical RE of $a_{fin},b_{fin}$ from $a_1,b_1,\dots,a_{n-1},b_{n-1}$.

Next consider a general circuit, which additionally involves T gates. 
If $G_2$ is a T gate, then $G_2X^aZ^b = X^{a'}Z^{b'}P^{a'}G_2$. Hence, $P^{a'}$ needs to be applied
on the output of $G_2$ before another teleportation.  This can be done by a correction-and-teleport 
Clifford circuit $C$ that first makes the Clifford correction and then teleports the qubit. Two things remain
to be done. First, we need to find a DQRE for this correction-and-teleport circuit. Second, we need to 
turn the measurement step of the teleportation to a Clifford operation. The second issue can be dealt
with by performing $Z^s$ for a random $s$ as a substitute of measurement on the computational basis. This
is called twirling.

For the first issue, we perform the random encoding on $C(x)$ as $E=CR^{\dagger}$ and $z=R(x)$
for a random unitary $R$. It is easy to see that $E(z)=C(x)$. For efficient sampling, we can restrict $R$ to an element of the Clifford group. This encoding is called group-randomizing DQRE. Note that $z$ can be computed without
knowing the circuit $C$. In addition, $E$ can be computed through a classical procedure $g$. Say $C$ is described
by a classical function $f$ acting on some classical input $a$. The procedure $g$ takes $a$ and the classical description of $R$ as inputs. There are two steps of $g$. It first computes $C=f(a)$ and then computes $E=CR^{\dagger}$.
An illustration is shown in Fig.~\ref{fig:compiler}(B).

Finally, we need to post the labels that correspond to $a,b$ of the predecessor gates so that the correction 
can be performed. This can be done by including the labels of the teleportation correction $a,b$  of the current
gate in the output wire. The labels in the input wire will also be included in the output wire. This makes
the input and output length grow linearly with the circuit size. We note that this polynomial size input
does not mean that we need to apply a random Clifford on a polynomial number of qubits, since we only need to perform single-qubit randomizers (specifically from the conjugated PX group) in order 
to carry out the group-randomizing DQRE.

In summary, the essence for decoupling the gates and making the process constant depth is the use of EPR pairs. Each input qubit of a gate is half an EPR pair and each output qubit of a gate is also half an EPR pair. 

\vspace{0.5cm}
\subsect{Oblivious transfer against specious adversary}
Assume a quantum-secure public-key encryption scheme, Protocol~\ref{Fig:Procedure} realizes oblivious transfer against a specious adversary.

\begin{algorithm}
\caption{\textsc{Oblivious transfer}}
\begin{flushleft}
\textit{Input}: The two parties are called $\mathcal{S}$ and $\mathcal{R}$, where $\mathcal{S}$ has two strings $y_0, y_1\in \{0,1\}^n$, and $\mathcal{R}$ has a bit $b\in \{0,1\}$.  
\end{flushleft}
\begin{algorithmic}[1]
\STATE
$\mathcal{R}$ generates a pair of public and private keys $(pk, sk)$, and randomly samples a $pk'$ from the public-key space without knowing its corresponding private key. 
\vspace{0.2cm}
\STATE
If $b=0$, $\mathcal{R}$ sends $(pk, pk')$ to $\mathcal{S}$. Otherwise, $\mathcal{R}$ sends $(pk', pk)$ to $\mathcal{S}$.
\vspace{0.2cm}
\STATE
Upon receiving $(pk_0, pk_1)$, $\mathcal{S}$ returns $e_0 = \textrm{Enc}_{pk_0}(y_0),e_1 = \textrm{Enc}_{pk_1}(y_1)$ to $\mathcal{R}$.
\vspace{0.2cm}
\STATE On receiving $(e_0, e_1)$, $\mathcal{R}$ decrypts $e_b$ with $sk$.
\end{algorithmic}
\label{Fig:Procedure}
\end{algorithm}

This protocol is secure against a semi-honest adversary. Since only classical data is involved, this scheme is also secure against a specious adversary. 

\vspace{0.5cm}
\subsect{Classical multi-party computation scheme}
In this section, we review a classical constant-round secure multi-party computation protocol from \cite{beaver1990round,rogaway1991round} which is based on the GMW protocol \cite{micali1987play}. We will also mention the remark of \cite{tate2003garbled} which points out an error in \cite{beaver1990round,rogaway1991round}. We provide a proof for its constant round complexity, completing a missing piece in the literature.

We first begin with the GMW protocol which gives a secure multi-party computation protocol with a round number linear with the circuit depth. Any boolean circuit can be realized by AND, NOT, and XOR gates. Here, an AND gate has two inputs. A NOT gate has one input. A XOR gate has an unbounded number of inputs.  
The inputs belong to $n$ parties. For an input bit $b^\omega$ that belongs to party $i$, party $i$ generates $r_1^\omega,\cdots, r_{i-1}^\omega, r_{i+1}^\omega, \cdots, r_n^\omega$ and sends $r_j^\omega$ to party $j$. Party $i$ himself then holds $b^\omega \oplus r_1^\omega \oplus\cdots \oplus r_{i-1}^\omega \oplus r_{i+1}^\omega \oplus\cdots \oplus r_n^\omega$. Then each party shares a part of $b^\omega$ called $b^\omega_i$ and the XOR of these parts becomes $b^\omega$. We next show that each intermediate wire and each output wire have the same property. Then by XORing $n$ shares of each output wire, we obtain the circuit output.  

For a NOT gate that has input wire $\omega_1$ and output wire $\omega_2$, we take $b^{\omega_2}_1 = 1-b^{\omega_1}_1 $ and $b^{\omega_2}_j = b^{\omega_1}_j$ for $2\le j \le n$. For a XOR gate with input wires $\omega_1,\cdots \omega_K$ and  an output wire $\omega_0$, we take $b^{\omega_0}_i =    b^{\omega_1}_i  \oplus\cdots \oplus b^{\omega_K}_i$ for $1\le i \le n$. For an AND gate, let the input wires be $\omega_1$ and $\omega_2$, and let the output wire be $\omega_3$.   Let us first examine the case of two parties, called $P_1$ and $P_2$.  According to the functionality of the AND gate, we have 
\begin{equation}
b^{\omega_3}  = ( b^{\omega_1}_1  \oplus  b^{\omega_1}_2  )  \wedge  ( b^{\omega_2}_1  \oplus  b^{\omega_2}_2  ). 
\end{equation}
From the view of $P_1$, he does not know $b^{\omega_1}_2$ and $b^{\omega_2}_2$, hence he views the expression as a function $S(b^{\omega_1}_2,b^{\omega_2}_2)= ( b^{\omega_1}_1  \oplus  b^{\omega_1}_2  )  \wedge  ( b^{\omega_2}_1  \oplus  b^{\omega_2}_2  )$. He then chooses a random bit $r$, and runs a 1-out-of-4 OT with $P_2$ with four values $(r\oplus S(0,0), r\oplus S(0,1), r\oplus S(1,0), r\oplus S(1,0))$. By the property of $OT$, $P_2$ gets $ r\oplus S(b^{\omega_1}_2,b^{\omega_2}_2) = r\oplus b^{\omega_3}$ and takes this to be $b^{\omega_3}_2$. $P_1$ takes $r$ to be $b^{\omega_3}_1$. For $n$ parties, we note that
\begin{eqnarray}
b^{\omega_3}  & =& ( b^{\omega_1}_1  \oplus \cdots \oplus b^{\omega_1}_n  )  \wedge  ( b^{\omega_2}_1  \oplus \cdots \oplus b^{\omega_2}_n  )  \nonumber \\
 & =& ( \oplus_{i=1}^n b^{\omega_1}_i  \wedge  b^{\omega_2}_i) \oplus (  \oplus_{i\not=j} b^{\omega_1}_i  \wedge  b^{\omega_2}_j)  
\end{eqnarray}
which is a 2-depth circuit with two-party AND gates and XOR gates. This finishes the GMW protocol.

Now we turn to the scheme of \cite{beaver1990round,rogaway1991round}. It consists of two parts. The first part is generating \emph{gate labels} and \emph{input signals}, which are accessible to all parties. The second part is evaluating gate labels and input signals  by each party. The second part involves no communication between the parties while the first part involves constant rounds of communication.

Let us first define the setting. Let $\Sigma=\{0,1\}$. Each party $i$ has an $\ell$ bit input $x_i \in \Sigma^{\ell}$ and a $2kW+W-l$ bit random string $r_i \in \Sigma^{2kW+W-l}$. In the first part, the parties jointly compute four gate labels $A_{00}^g, A_{10}^g, A_{01}^g, A_{11}^g$ for each gate $g$ and an input signal $\sigma^\omega$ for each input bit $\omega$. 

Each $r_i$ can be expressed as $s^1_{0i}s^1_{1i}\cdots s^W_{0i}s^W_{1i}\lambda_i^1\cdots\lambda_i^{W-l}$ where $W$ is the number of wires, $s$'s are of length $k$ and $\lambda$'s are of length 1. 
The mask on the semantic value is
\begin{equation}
 \lambda^{\omega} =  \lambda^{\omega}_1 \oplus \cdots \oplus \lambda^{\omega}_n
\end{equation}
for non-output wires (i.e., $\omega \le W-l$) and $ \lambda^{\omega} =0$ for output wires (i.e., $W-l < \omega \le W$). 
Hence, the input wires from the parties (i.e., $\omega \le n\ell$ ) satisfy
\begin{equation}
\sigma^\omega  = s^\omega_{b^\omega \oplus \lambda^\omega},
\end{equation}
where $b^\omega$ is the semantic value of the wire $\omega$ and $s^\omega_b$ is given by $s_{b1}^\omega \cdots s_{bn}^\omega b$.

The gate labels for a gate $g$ are computed as 
\begin{eqnarray}
A_{ab}^g &=& G_b(s_{a1}^\alpha) \oplus \cdots \oplus G_b(s_{an}^\alpha) \oplus G_a(s_{b1}^\beta) \oplus \cdots \nonumber \\ &&\oplus G_a(s_{bn}^\beta)
 \oplus s^{\gamma}_{[(\lambda^\alpha\oplus a) \otimes( \lambda^\beta \oplus b)]\oplus \lambda^\gamma},
\end{eqnarray}
where $\otimes$ is the function computed by $g$. Here, $a,b\in \{0,1\}$, $G_a(\cdot)$ and $G_b(\cdot)$ are pseudo-random generators from $\Sigma^k$ to $\Sigma^{nk+1}$. 

Each party can compute the output of their joint computation on its own given the gate labels and the input signals.
Starting from the input signals, one can compute intermediate signals and output signals as follows. For each gate $g$ with input wires $\alpha,\beta$ and an output wire $\gamma$, 
one computes
\begin{equation}
\sigma^\gamma = G_b(\sigma_1^\alpha) \oplus \cdots \oplus G_b(\sigma_n^\alpha) \oplus G_a(\sigma_1^\beta) \oplus \cdots \oplus G_a(\sigma_n^\beta) \oplus A_{ab}^g
\end{equation}
Here $\sigma^\omega_i$ is a bit string defined by the $(i-1)k+1$- to $ik$- bits of $\sigma^\omega$. The bits $a$ and $b$ are the last bits of $\sigma^\alpha$ and $\sigma^\beta$, respectively.

Finally, the least significant bits of the output wires $\sigma^{W-l+1},\cdots, \sigma^W$ are outputted.

This finishes the description of the scheme in  \cite{beaver1990round,rogaway1991round}. In the original description of \cite{beaver1990round,rogaway1991round}, the authors make a false claim that a wire can be used for multiple times as inputs to multiple gates. It is shown in \cite{tate2003garbled} that this will raise a security loophole for the scheme. Hence, we abandon such a false claim in our description. Moreover, for the quantum setting, due to no-cloning theorem, multiple uses of a quantum wire is impossible.

We are now ready to show that this process requires only constant rounds of communication. To the best of our knowledge, the proof that we show below is new in the literature. It suffices to show that the process can be expressed as a constant-depth circuit with the gate set compatible with the GMW protocol. The first quantity that requires secure joint computation of $n$ parties is $\lambda^\omega$. From its expression, it can be realized by a single XOR gate and hence requires a circuit depth 1. 

The second quantity that requires joint computation is $\sigma^\omega$. It can be decomposed as two parts, namely $b=b^\omega \oplus \lambda^\omega$ and $\sigma^\omega= s^\omega_b$. The first part can be realized by a single XOR gate. For the second part, we notice that each bit of $\sigma^\omega$ can be computed separately. In more details, let $(\sigma^\omega)_j$ denote the $j$-th bit of $\sigma^\omega$ and let $(s^\omega_b)_j$ denote the $j$-th bit of $s^\omega_b$. Then $(\sigma^\omega)_j$ is only determined by $(s^\omega_0)_j$, $(s^\omega_1)_j$ and $b$. More precisely, for the last bit $j$, we have $(\sigma^\omega)_j=b$ and for all other $j$'s, we have 
\begin{equation}
(\sigma^\omega)_j=(1-b) \wedge (s^\omega_0)_j \oplus b \wedge (s^\omega_1)_j,
\end{equation}
which can be realized by a depth-2 circuit. Taking into account that $b= b^\omega \oplus \lambda^\omega$ requires a depth-1 circuit and the fact that  $\lambda^\omega$ itself requires a depth-1 circuit, the second quantity requires a circuit of depth at most 4. Further optimizing the circuit depth is certainly possible, but we will not pursue it here.  

The third quantity that requires joint computation is $A_{ab}^g$. Note that $s_{ai}^\alpha$ and $s_{bi}^\beta$ are held by party $i$ initially, hence he can locally compute $G_b(s_{ai}^\alpha)$ and $G_a(s_{bi}^\beta)$. Hence, the joint computation only involves the XOR operation and the computation of $s^\gamma_c$. Here $c=[(\lambda^\alpha\oplus a) \otimes( \lambda^\beta \oplus b)]\oplus \lambda^\gamma$. By its expression, $c$ can be computed by a depth-3 circuit. $s^\gamma_c$ can be computed by a depth-2 circuit following the analysis of the second quantity. The final XORing of some $G(\cdots)$'s and $s^\gamma_c$ requires a single XOR gate. In summary, a depth-six circuit suffices for the joint computation of $A_{ab}^g$.

In summary, the gate labels and input signals require at most a depth-6 circuit to compute. According to the GMW protocol, a constant-depth circuit requires only a constant number of communication. This finishes the proof of constant round complexity.

%%%%%%%%%%%%%%%%%%%%%%%%%%%%%%%%%%%%%%%%
% choose a style
%\bibliographystyle{ieeetr}
%\bibliographystyle{unsrt}
\bibliographystyle{apsrev4-1}
% \bibliographystyle{iopart-num}
%%%%%%%%%%%%%%%%%%%%%%%%%%%%%%%%%%%%%%%%

%%%%%%%%%%%%%%%%%%%%%%%%%%%%%%%%%%%%%%%%
% choose a .bib file
\bibliography{BibliMPQC}
%%%%%%%%%%%%%%%%%%%%%%%%%%%%%%%%%%%%%%%%

%%%%%%%%%%%%%%%%%%%%%%%%%%%%%%%%%%%%%% %%%%%%%%%%%%%%%%%%%%%%%%%%%%%%%%%%%%%%

\renewcommand{\theequation}{S\arabic{equation}}
\setcounter{equation}{0}
\renewcommand{\thefigure}{S\arabic{figure}}
\setcounter{figure}{0}
\renewcommand{\thesection}{S\arabic{section}}
\setcounter{section}{0}

\onecolumngrid

%%%%%%%%%%%%%%%%%%%%%%%%%%%%%%%%%%%%%% %%%%%%%%%%%%%%%%%%%%%%%%%%%%%%%%%%%%%%
%%%%%%%%%%%%%%%%%%%%%%%%%%%%%%%%%%%%%% %%%%%%%%%%%%%%%%%%%%%%%%%%%%%%%%%%%%%%

%\vspace*{2cm}
\begin{center}
	{\bf \Large 
		Supplementary Materials to\vspace*{0.3cm}\\ 
		\emph{Constant-round Multi-party Quantum Computation for Constant Parties}
	}
\end{center}

\vspace*{0.3cm}
{\center{
		\hspace*{0.1\columnwidth}\begin{minipage}[c]{0.8\columnwidth}
			In these Supplementary Materials, 
			(i)  we provide a security proof for QOTP; 
			(ii) we provide a security proof of MPQC for two parties;
			(iii) and we provide a security proof of MPQC for multiple parties.
		\end{minipage}
	}
}

\section{Security of QOTP}
\label{Appsec:QOTP}
In this section, we show the security of QOTP.

In the classical bit case, an adversary is said to be unable to
determine the value of a bit if he cannot be certain whether the probability
of 0 is strictly greater or less than the probability of 1.

In the quantum case, for a qubit, we similarly define that,
when enumerating the plane that contains the origin in the Bloch sphere space,
if the maximal probability difference that the adversary can determine the actual 
qubit is on one side of the plane than the other side is 0, then the qubit is 
information-theoretically secure.

Under this definition, sending four copies of Alice's qubit encrypted under QOTP 
is secure, as even infinite copies of these values are sent, Eve still cannot 
distinguish Alice's qubit by the four points on the Bloch sphere, which are related by $I$, $\sigma_x$, $\sigma_y$, $\sigma_z$.
This is because every plane passing through the origin is unable to separate the probability space of this qubit (the four points)
into two unequal probability regions.

\section{Proof of the two party case}
\label{Appsec:TwoParty}
\begin{definition}[Advantage for a distinguisher $D$] 
Given a classical or quantum string $r$ from the real world $\textsf{Real}$ or the simulated world $\textsf{Sim}$,
the distinguisher $D$ outputs either 0 or 1.
The advantage of $D$ is defined as 
\begin{equation}
Adv[D] := | Pr(D(\textsf{Real})=1) - Pr(D(\textsf{Sim})=1) |.
\end{equation}
\end{definition}
 
 Alice's simulator $\textsf{Sim}$ is as follows. The simulator has an input $F(x_1,x_2,\cdots, x_n)$ and a topology of the circuit  $\Gamma$. 
It first generates a garbled circuit $\hat{E} (F(x_1,x_2,\cdots, x_n), \Gamma)$ with inputs  $F(x_1,x_2,\cdots, x_n)$ and $\Gamma$.
 It then takes the value that corresponds to Bob's label as a replacement of Bob's $y_b$.
This is given as part of the input to the simulator $S_1$ in OT. Alice's simulator then generates whatever $S_1$ outputs. 
 
To show this simulator works, we use a hybrid argument. 
View Alice's simulator as Game 2, the real world as Game 0, and define Game 1 as follows:
 
 \emph{Bob generates a DQRE as usual and sends Alice the part that he can compute. Next instead of a usual OT protocol between 
 Alice and Bob, the simulator $S_1$ in OT takes the place of Bob to communicate with Alice.}
 
 We first prove two lemmas. 
 
 \begin{lemma}
 If an adversary can distinguish between Game 1 and the real world $\textsf{Real}$ with an advantage larger than $\epsilon$, then it can break the $\epsilon$-secure OT. Consequently, if the OT is $\epsilon$-secure, then no adversary can distinguish between Game 1 and  the real world $\textsf{Real}$ with an advantage larger than $\epsilon$.
 \end{lemma}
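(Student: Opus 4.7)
The plan is to prove this lemma by a direct reduction to the $\epsilon$-security of the underlying OT primitive. Given any distinguisher $D$ that separates $\textsf{Real}$ from Game 1 with advantage exceeding $\epsilon$, I will build a distinguisher $D'$ against OT whose advantage at least matches $D$'s, yielding a contradiction.

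Concretely, $D'$ acts as a wrapper around the OT challenger. It internally executes every non-OT step of Protocol 1 using Bob's honestly chosen inputs: it generates the DQRE, produces and forwards $\hat{G}$, $\widehat{Enc_k(x)}$, $\hat{y}$, and receives $F(x,y)$ at the end, collecting all of these into the Alice-view it will hand to $D$. When the protocol reaches the OT call, $D'$ splices its external OT transcript (either a genuine execution of OT or the output of $S_1$) into the view at the OT slot. By construction, if the external transcript is a real OT execution then the composite view fed to $D$ is distributed identically to $\textsf{Real}$, while if it is generated by $S_1$ the view is distributed identically to Game 1. Hence $\mathrm{Adv}[D'] \geq \mathrm{Adv}[D] > \epsilon$, directly contradicting $\epsilon$-security of OT and giving the contrapositive claim.

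Because Alice's QOTP key $k$ is multi-bit and the protocol invokes OT once per bit, this single-step reduction is most rigorously phrased inside a short hybrid chain that replaces the OT calls one by one, with $H_0=\textsf{Real}$ and $H_{|k|}=$ Game 1. A standard pigeonhole argument then exhibits an index $i$ where consecutive hybrids are separated with advantage at least $\mathrm{Adv}[D]/|k|$, and the wrapper construction above, targeted at the $i$-th OT, achieves exactly that advantage against the OT primitive. This remains non-negligible whenever $\mathrm{Adv}[D]$ is, which suffices to yield the lemma's contrapositive in the asymptotic regime that the paper actually works in.

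The main obstacle I expect is purely bookkeeping rather than conceptual: one has to check that all inputs used in the simulated portions of the protocol (Bob's $y$, the DQRE randomness, the labels and key bits that $D'$ did not submit to the OT oracle) are sampled consistently with their distributions in the real protocol, so that $D'$'s internal simulation is truly identical to Bob's honest view and the only difference between the two cases lives in the OT transcript. Writing this out carefully across the hybrids, and in particular tracking how labels for the key bits already handled are re-used in the DQRE evaluation, is where the proof requires care, even though each individual verification is routine.
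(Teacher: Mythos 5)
Your proposal is correct and follows the same core strategy as the paper: a reduction that simulates everything in Alice's view except the OT transcript and embeds the OT challenge (real execution versus $S_1$) in the remaining slot, so that the two cases reproduce $\textsf{Real}$ and Game~1 exactly. There are two points where you diverge, both in your favor. First, where you build a uniform wrapper that samples the DQRE, the inputs, and the key itself and then submits the resulting label pairs to the OT challenger, the paper instead uses an averaging argument to fix a single tuple $(y_0',y_1',b')$ on which the distinguisher retains advantage larger than $\epsilon$, and conditions the reduction on that tuple; the two devices accomplish the same embedding, yours being the more standard and self-contained formulation. Second, and more substantively, you observe that Protocol~1 runs one OT \emph{per bit} of the QOTP key $k$, so the replacement of the whole OT phase by $S_1$ must be done through a hybrid chain of length $|k|$, incurring a multiplicative loss: an adversary with advantage larger than $\epsilon$ against $\textsf{Real}$ versus Game~1 only yields an OT adversary with advantage larger than $\epsilon/|k|$. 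The paper's proof silently treats the entire phase as a single OT instance with one tuple $(y_0',y_1',b')$ and therefore claims the exact constant $\epsilon$; your version proves the quantitatively weaker bound $|k|\cdot\epsilon$, which still gives the intended negligible-advantage conclusion but exposes a real slack in the lemma as literally stated. If you want to match the paper's constant you would have to reformulate the OT phase as a single string-OT, which is not possible here since each bit of $k$ carries its own independent choice bit; so your hybrid is the right fix, and the factor $|k|$ should properly appear in the final $2\epsilon$ accounting of the surrounding security proof.
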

 
 \begin{proof}
 Since the first step of Game 1 and the real world are the same (Bob generates DQRE and sends Alice the parts except $y_0$ and $y_1$),
 the distribution of $(y_0, y_1, b)$ are the same for these two worlds. If a distinguisher $A$ can distinguish these two worlds with advantage larger than $\epsilon$, then at least for one tuple $(y_0', y_1', b')$, the distinguisher $A$ would have an advantage larger than $\epsilon$. Now consider a distinguisher $B$ of OT that takes  this tuple $(y_0', y_1', b')$ as his input, and performs the following attack:
 \begin{enumerate}
 \item $B$ conditions on that Alice and Bob's joint view is $(y_0', y_1', b')$ after Bob generates the DQRE.
 \item Alice interacts with $B$'s challenger.
 \item Finally, $B$ outputs whatever $A$ outputs.
 \end{enumerate}
 Now when $B$ is in Experiment 0 of its attack game, it perfectly mimics the behaviour of $A$ in the real world $\textsf{Real}$, and when $B$ is in Experiment 1 of its attack game, it perfectly mimics the behaviour of $A$ in Game 1.  Hence, $Pr(A(\textsf{Real})=1)$ is the same as $Pr(B(\textrm{Experiment 0})=1)$, and $Pr(A(\textrm{Game 1})=1)$ is the same as $Pr(B(\textrm{Experiment 1})=1)$.
Therefore, the advantage of $B$ is precisely the same as the advantage of $A$, which is larger than $\epsilon$. Hence this breaks the property of a $\epsilon$-secure OT.
 \end{proof}
 
  \begin{lemma}
If the DQRE is $\epsilon$-private, then no adversary can distinguish Game 1 with the simulated world $\textsf{Sim}$ with an advantage larger than $\epsilon$.
 \end{lemma}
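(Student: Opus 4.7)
The plan is to reduce any distinguisher between Game 1 and the simulated world $\textsf{Sim}$ to a distinguisher that breaks the $\epsilon$-privacy of the DQRE. First I would isolate the single point at which the two games disagree: in Game 1, Bob honestly generates the DQRE $\hat{G}(\widehat{Enc_k(x)},\hat{y},\hat{k})$ from the true inputs, whereas in $\textsf{Sim}$ the encoding is produced by the DQRE simulator $\textsf{Sim}_{DQRE}$ on input $F(x_1,\dots,x_n)$ and the topology $\Gamma$. After that first-round message, both worlds use the OT simulator $S_1$ in identical fashion, and the OT interaction can be produced locally by the reduction without any further help from Bob.

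Concretely, suppose for contradiction that there is a distinguisher $D$ with
\begin{equation}
\bigl|\Pr[D(\text{Game 1})=1]-\Pr[D(\textsf{Sim})=1]\bigr|>\epsilon.
\end{equation}
I would build a DQRE adversary $D'$ that receives a challenge encoding $\xi$, drawn either from the real distribution $\hat{F}(x)$ or from $\textsf{Sim}_{DQRE}(F(x))$. The adversary $D'$ inserts $\xi$ into Alice's transcript in place of Bob's first-round message, then invokes $S_1$ with input $(b,\hat{k})$, where $\hat{k}$ is the active label recovered from $\xi$ by the DQRE decoder, to produce a simulated OT transcript. Finally $D'$ runs $D$ on the assembled Alice view and forwards its bit. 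When $\xi$ is real, the full transcript is distributed exactly as in Game 1; when $\xi$ is simulated, it is distributed exactly as in $\textsf{Sim}$. Hence $\textsf{Adv}[D']=\textsf{Adv}[D]>\epsilon$, contradicting $\epsilon$-privacy of DQRE.

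The step I expect to require the most care is verifying that the reduction can faithfully emulate the post-DQRE portion of Alice's view in \emph{both} branches without access to anything Bob privately holds. The worry is that $S_1$ might implicitly need material (for example the inactive label or the full QOTP key $k$) that is present in the real DQRE but hidden by $\textsf{Sim}_{DQRE}$. By the OT specification, however, $S_1$ requires only the pair $(b,y_b)$, and by the decoding property of DQRE the active label $y_b=\hat{k}$ is extractable from $\xi$ regardless of which distribution it came from, since the simulated encoding is guaranteed to decode correctly. Once this compatibility check is confirmed, the reduction is tight and the claimed $\epsilon$ bound follows immediately.
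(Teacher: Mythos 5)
Your reduction is essentially the paper's own proof: both convert a distinguisher between Game~1 and $\textsf{Sim}$ into a DQRE adversary by inserting the challenge encoding into Alice's view, synthesizing the OT portion with the simulator $S_1$ on input $(b, y_b)$, and forwarding the distinguisher's bit, so the advantages coincide exactly. The only cosmetic difference is that the paper takes the active label $y_b=\hat{k}$ directly as a component of the (decomposable) challenge encoding handed over by the challenger rather than ``recovering it with the decoder,'' which sidesteps the compatibility worry you raise at the end.
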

 
 \begin{proof}
Let $A$ be the distinguisher for Game 1 and the simulated world $\textsf{Sim}$.
Given an adversary $B$ of DQRE, it performs the following attack.
 \begin{enumerate}
 \item It first takes the DQRE from the challenger, and sends the parts except the one that corresponds to Alice's input (denoted by $y_b$) to Alice.
 \item It uses the simulator $S_1$ of OT that uses $y_b$ and $b$ to generate a transcript between Bob and Alice.
 \item Finally, $B$ outputs whatever $A$ outputs.
 \end{enumerate}
  Now when $B$ is in Experiment 0 of its attack game, it perfectly mimics the behaviour of $A$ in Game 1, and when $B$ is in Experiment 1 of its attack game, it perfectly mimics the behaviour of $A$ in the simulated world $\textsf{Sim}$.  Hence, $Pr(A(\textsf{Sim})=1)$ is the same as $Pr(B(\textrm{Experiment 1})=1)$, and $Pr(A(\textrm{Game 1})=1)$ is the same as $Pr(B(\textrm{Experiment 0})=1)$.
Therefore, the advantage of $B$ is precisely the same as the advantage of $A$, therefore if the DQRE is $\epsilon$-private, then no adversary can distinguish Game 1 from the simulated world $\textsf{Sim}$ with an advantage larger than $\epsilon$.
 \end{proof}

 Now we return to the security proof for the original problem. By the lemmas, we have 
 \begin{eqnarray}
Adv[D]& =& | Pr(D(\textsf{Real})=1) - Pr(D(\textsf{Sim})=1)  | \nonumber \\
&\le & | Pr(D(\textsf{Real})=1) - Pr(D(\textrm{Game 1})=1)  | \nonumber \\
&& +  | Pr(D(\textrm{Game 1})=1) - Pr(D(\textsf{Sim})=1) | \nonumber \\
&\le & \epsilon + \epsilon \\
& =& 2\epsilon, \nonumber 
\end{eqnarray}
 hence, the protocol is secure.

\section{Security proof for the multi-party case}
\label{AppSec:MultiProof}
In this section, we show the security of MPQC for multiple parties. 
Intuitively, the proof can be inferred from two facts.
Firstly, after the DQRE is generated, according to its property,  nothing can be inferred except what can be inferred from the output of DQRE
$F(\rho_1, \cdots, \rho_n)$. Secondly, the label and the semantic value of each wire is decoupled, which can only be known if all parties reveal their shares of qubit flipping. 

Now, we are ready to present the formal proof.
Assume the adversary Eve controls $n-1$ parties. By the symmetry of the protocol, without loss of generality, we assume Eve controls $P_2, \cdots, P_n$ and aims to learn the private input of $P_1$.

The proof consists of two steps. For the first step, we note that the active wire labels and gate labels are essentially random except that they can be used to compute $F(\rho_1, \cdots, \rho_n)$, as followed from the property of DQRE. In particular, the active wire labels by themselves give no information on $P_1$'s input except which can be inferred from $F(\rho_1, \cdots, \rho_n)$. 

In the second step, we note that the active wire label combined with the $n-1$ shares that Eve possess for this wire is independent of the semantic value of this wire. In particular, the semantic values of the input wires of $P_1$ are hided from Eve.

Three remarks are in order. First, the correspondence between a wire's active label and its semantic value is known for those wires that can be computed based solely on $P_2,\cdots, P_n$'s inputs. But once $P_1$'s input is involved, the correspondence becomes completely opaque to the adversary. Secondly, it is instructive to see why a garbled circuit generated by a single party would fail in the security proof. In this case, the second step in the security proof no longer holds, as the active wire label completely reveals the semantic value for any wire once Eve controls this single party that generates the garbled circuit and the party that obtains the active wire labels. Thirdly, it is also instructive to see why a failure of the second step does not hurt the security for the two party case ($n=2$). This is because the active labels and the correspondence between wire labels and wire values are held by two different parties. Eve cannot obtain both these information, as she can only control $n-1=1$ party.

\end{document}